\title{Frameworks to Design Approximation Algorithms for Finding Diverse Solutions in Combinatorial Problems\thanks{
A preliminary version of this paper appeared in the Proceedings of the 37th AAAI Conference on Artificial Intelligence, 2023~\cite{10.1609/aaai.v37i4.25511}.
Compared with the conference version, this version includes full proofs, expanded discussions, and additional results presented in Section~\ref{sec:random}.
}}
\author{
Tesshu Hanaka\thanks{Kyushu University, Fukuoka, Japan, \email{hanaka@inf.kyushu-u.ac.jp}} \and
Masashi Kiyomi\thanks{Seikei University, Tokyo, Japan, \email{kiyomi@st.seikei.ac.jp}} \and
Yasuaki Kobayashi\thanks{Hokkaido University, Sapporo, Japan, \email{koba@ist.hokudai.ac.jp}} \and
Yusuke Kobayashi\thanks{Kyoto University, Kyoto, Japan, \email{yusuke@kurims.kyoto-u.ac.jp}} \and
Soh Kumabe\thanks{CyberAgent, Tokyo, Japan, \email{kumabe\_soh@cyberagent.co.jp}} \and
Kazuhiro Kurita\thanks{Okayama University, Okayama, Japan, \email{k-kurita@okayama-u.ac.jp}} \and
Yota Otachi\thanks{Nagoya University, Nagoya, Japan, \email{otachi@nagoya-u.jp}}
}
\newcommand{\email}[1]{\texttt{#1}}
\newcommand{\set}[1]{\{#1\}}
\newcommand{\inset}[2]{\{#1 : #2\}}
\newcommand{\size}[1]{\left|#1\right|}
\newcommand{\order}[1]{O(#1)}
\newtheorem{lemma}{Lemma}
\newtheorem{theorem}[lemma]{Theorem}
\newtheorem{corollary}[lemma]{Corollary}
\newtheorem{definition}[lemma]{Definition}
\newtheorem{observation}{Observation}
\newcommand{\E}{\mathbf{E}}
\newcommand{\dsum}[1]{d_{\rm sum}(#1)}
\newcommand{\MSDC}{\textsc{Max-Sum Diverse Solutions}}
\newcommand{\NP}{\ensuremath{\mathrm{NP}}}
\crefname{lemma}{Lemma}{Lemmas}
\crefname{theorem}{Theorem}{Theorems}
\crefname{corollary}{Corollary}{Corollaries}
\crefname{figure}{Figure}{Figures}
\crefname{algorithm}{Algorithm}{Algorithms}
\date{}
\begin{document}

\maketitle

\begin{abstract}
    Finding a \emph{single} best solution is the most common objective in combinatorial optimization problems.
    However, such a single solution may not be applicable to real-world problems as objective functions and constraints are only ``approximately'' formulated for original real-world problems.
    To solve this issue, finding \emph{multiple} solutions is a natural direction, and diversity of solutions is an important concept in this context.
    Unfortunately, finding diverse solutions is much harder than finding a single solution.
    To cope with the difficulty, we investigate the approximability of finding diverse solutions.
    As a main result, we propose a framework to design approximation algorithms for finding diverse solutions, which yields several outcomes including constant-factor approximation algorithms for finding diverse matchings in graphs and diverse common bases in two matroids and PTASes for finding diverse minimum cuts and interval schedulings.
\end{abstract}

\section{Introduction}

One way to solve a real-world problem is to formulate the problem as a mathematical optimization problem and find a solution with an optimization algorithm.
However, it is not always easy to formulate an appropriate optimization problem as real-world problems often include intricate constraints and implicit preferences, which are usually simplified in order to solve optimization problems.
Hence, an optimal solution obtained in this way is not guaranteed to be a ``good solution'' to the original real-world problem.
To cope with this underlying inconsistency, the following two-stage approach would be promising: algorithms find multiple solutions and then users find what they like from these solutions.
One may think that top-$k$ enumeration algorithms (see~\cite{Eppstein:EA:2008} for a survey) can be used for this purpose.
However, this is not always the case since top-$k$ enumeration algorithms may output solutions similar to one another. (See~\cite{Wang:KDD:2013,Yuan:VLDBJ:2016,Hao:FGCS:2020}, for example).
Such a set of solutions is not useful as a ``catalog'' of solutions provided to users.

As a way to resolve this issue, algorithms are expected to find ``diverse'' solutions, and then finding ``diverse'' solutions has received considerable attention in several fields such as Artificial Intelligence~\cite{Ingmar:AAAI:2020,Nadel:SAT:2011}, Data Mining~\cite{Wang:KDD:2013,Yuan:VLDBJ:2016}, and Operations Research~\cite{Danna:ORL:2009,Petit:INFORMS:2019}.
The problem of finding diverse solutions can be modeled as a multi-objective optimization problem, which optimizes some diversity measure and the quality of solutions simultaneously.
To solve this multi-objective optimization problem, there are several approaches, such as mathematical programming~\cite{DannaFGW:IPCO:2007,Danna:ORL:2009,Petit:INFORMS:2019}, constraint programming~\cite{Hebrard:AAAI:2005,Petit:IJCAI:2015}, heuristics~\cite{DannaFGW:IPCO:2007,Drosou:SIGMODREC:2010,HentenryckCG:CP:2009,VieiraRBHSTT:ICDE:2011}, and so forth. See Table~1 in \cite{Petit:INFORMS:2019}, which summarizes various approaches to find diverse solutions in the literature.
These approaches are not only practical but also versatile, enabling us to formulate various combinatorial problems in these approaches.
However, theoretical guarantees on the running time of algorithms and/or the quality of solutions would be difficult to obtain as some (general purpose) mathematical/constraint solvers or heuristic objectives/algorithms are key components of these algorithms.

Recently, theoretical aspects of the problem of finding diverse solutions in combinatorial problems are investigated.
This research direction would be made by Fellows and Rosamond who proposed the \emph{diverse X paradigm} in Dagstuhl Seminar 18421~\cite{Fernau:DR:2019}.
In this paradigm, ``X'' is a placeholder that represents solutions we are looking for, and they asked for theoretical investigations of finding diverse solutions.
Since the problem of finding diverse solutions is much harder than that of finding a single solution for some ``X'', it would be reasonable to consider the problem from the perspective of fixed-parameter tractability\footnote{Roughly speaking, the goal is to develop algorithms that run in time $f(k){\rm poly}(n)$, where $n$ is the input size and $k$ is a parameter defined on a specific problem.}.
From this proposition, several fixed-parameter tractable (FPT) algorithms are developed so far. 
Baste et al.\ gave algorithms for finding diverse solutions related to hitting sets~\cite{Baste:ALGO:2019} and those on bounded-treewidth graphs~\cite{Baste:AI:2022}.
Hanaka et al.~\cite{Hanaka:AAAI:2021} proposed a framework to obtain FPT algorithms for finding diverse solutions in various combinatorial problems.
Fomin et al.~\cite{Fomin:ISAAC:2020,Fomin:STACS:2021} investigated the fixed-parameter tractability of finding diverse solutions related to matchings and matroids.
In these work, the number of solutions to be found is considered as a small parameter, which can be a potential drawback in practice.
As we discussed, a set of diverse solutions would be displayed as a ``catalog'' of solutions and hence moderate number of solutions are essential to users to make their own decisions based on the displayed solutions.

For this reason, we aim to develop theoretically efficient algorithms for finding a moderate number of diverse solutions rather than a small number of diverse solutions. 
As we mentioned, the problem of finding diverse solutions is harder than that of finding a single solution.
We first observe that diversity measures have a significant impact on the computational complexity of the diverse version of combinatorial problems:
The problem of computing $k$ bases of a matroid maximizing the minimum (weighted) Hamming distance (\textsc{Max-Min Hamming Distance}) is \NP-hard~\cite{Fomin:STACS:2021}, while the problem maximizing the sum of (weighted) Hamming distance (\textsc{\textsc{Max-Sum Hamming Distance}}) is solvable in polynomial time~\cite{Hanaka:AAAI:2021}.
Hanaka et al. \cite{Hanaka:AAAI:2022} enhanced this observation by showing that the diverse versions of several classical combinatorial problems, such as bipartite matchings, arborescences, shortest paths, are polynomial-time solvable under \textsc{Max-Sum Hamming Distance}, while no such results for \textsc{Max-Min Hamming Distance} are known.
These circumstances indicate that \textsc{Max-Sum Hamming Distance} is theoretically easier than \textsc{Max-Min Hamming Distance}.
However, there are still computationally hard problems under \textsc{Max-Sum Hamming Distance}: For example, the problem of computing a maximum matching in a graph is known to be solvable in polynomial time, whereas that of computing two maximum matchings $M_1$ and $M_2$ maximizing $|M_1 \mathbin{\triangle} M_2|$ is known to be \NP-hard~\cite{Holyer81a} (see Section~\ref{sec:appl} for other examples).
Thus, we tackle this intractability by developing \emph{polynomial-time approximation algorithms} for the diverse version of various combinatorial problems.
To this end, we employ \textsc{Max-Sum Hamming Distance} as our diversity measure (see \Cref{sec:prel} for its definition), which might be somewhat theoretically tractable, but there are still several obstacles to be overcome.
We note that this diversity measure is frequently used in both experimental and theoretical settings~\cite{Baste:AI:2022,Danna:ORL:2009,Hanaka:AAAI:2022,HentenryckCG:CP:2009,Petit:IJCAI:2015} 

Our main result is a framework for designing efficient approximation algorithms with constant approximation factors for finding diverse solutions in combinatorial problems.
Roughly speaking, our approximation framework says that if we can \emph{enumerate} top-$k$ \emph{weighted} solutions in polynomial time, then we can obtain in polynomial time \emph{unweighted} solutions maximizing our diversity measure with constant approximation factors. 
Moreover, suppose that we can exactly maximize our diversity of solutions in polynomial time when the number of solutions we are looking for is bounded by a constant.
Then, our framework yields a polynomial-time approximation scheme (PTAS), meaning that factor-$(1 - \varepsilon)$ approximation in polynomial time for every constant $\varepsilon > 0$.
By applying our framework, we obtain efficient constant-factor approximation algorithms for finding diverse matchings in a graph and common bases of two matroids, while PTASes for finding diverse minimum cuts and interval schedulings.
Let us note that these diversity maximization problems are unlikely to be solvable in polynomial time, which will be discussed later.

The approximation factor of our framework comes from previous work on the dispersion problem (see \Cref{sec:prel} for its definition).
A similar framework was independently proposed by Gao et al.~\cite{Gao:LATIN:2022}.
In both frameworks, the subproblem of finding a ``furthest solution'' is a key ingredient.
They reduced this subproblem to the budget-constrained optimization problem and solve it by bi-approximation algorithms.
This makes their framework more flexible, allowing to find diverse approximate weighted solutions, while our framework only focus on unweighted solutions.
As opposed to this weight restriction, the approximation factor of our framework is much better than theirs: their approximation factor is $1/2$, while ours are $\max(1-1/k, 1/2)$ or even $1 - \varepsilon$ for any constant $\varepsilon>0$, where $k$ is the number of solutions we are looking for.

The rest of this paper is organized as follows.
The next section gives some notation and terminology used in this paper.
In particular, we give an overview of the result of \cite{Cevallos:MOR:2019}, which is a key to our approximation algorithms.
In \Cref{sec:framework}, we describe our framework to find diverse solutions in combinatorial problems.
Then, in \Cref{sec:appl}, we discuss some applications of our framework, including the diverse versions of the maximum matching problem, the matroid intersection problem, the minimum cut problem, and the interval scheduling problem.
In \Cref{sec:random}, we improve the approximation factor of our framework to $(1 - \varepsilon)(1 - 1/k)$ in a setting that allows duplications.
Finally, we conclude our paper with some further directions in \Cref{sec:concl}.

\section{Preliminaries}\label{sec:prel}
We denote the set of real numbers, the set of non-negative real numbers, and the set of positive real numbers as $\mathbb R$, $\mathbb R_{\ge 0}$, and $\mathbb R_{> 0}$, respectively.
Let $E$ be a set.
We denote the set of all subsets of $E$ as $2^E$.
A function $d\colon E \times E \to \mathbb R_{\ge 0}$ is called a \emph{metric} (on $E$) if it satisfies the following conditions: for $x, y, z \in E$, (1) $d(x, y) = 0$ if and only if $x = y$; (2) $d(x, y) = d(y, x)$; (3) $d(x, z) \le d(x, y) + d(y, z)$.
Suppose that $E \subseteq \mathbb R^m$ for some integer $m$.
For $x \in E$, we denote by $x_i$ the $i$th component of $x$.
If $d(x, y) = \sum_{1 \le i \le m} |x_i - y_i|$ holds for $x, y \in E$, then $d$ is called an \emph{$\ell_1$-metric}.

Let $E$ be a finite set.
For $X, Y \subseteq E$, the symmetric difference between $X$ and $Y$ is denoted by $X \mathbin{\triangle} Y$ (i.e., $X \mathbin{\triangle} Y = (X \setminus Y) \cup (Y \setminus X)$). 
Let $w: E \to \mathbb R_{>0}$.
A \emph{weighted Hamming distance} is a function $d: 2^E \times 2^E \to \mathbb R_{\ge 0}$ such that for $X, Y \subseteq E$, $d_w(X, Y) = w(X \mathbin{\triangle} Y)$, where $w(Z) = \sum_{x \in Z}w(x)$ for $Z \subseteq E$.
Suppose that $E = \{1, 2, \ldots, m\}$.
We can regard each subset $X \subseteq E$ as an $m$-dimensional vector $x = (x_1, \ldots, x_m)$ defined by $x_i = w(i)$ if $i \in X$ and $x_i = 0$ otherwise, for $1 \le i \le m$.
It is easy to observe that for $X, Y \subseteq E$, $d_w(X, Y) = \sum_{1 \le i \le m}|x_i - y_i|$, where $x$ and $y$ are the vectors corresponding to $X$ and $Y$, respectively.
Thus, the weighted Hamming distance $d_w$ can be considered as an $\ell_1$-metric.

In this paper, we focus on the following diversity measure $\dsum{\cdot}$, called the \emph{sum diversity}.
Let ${\mathcal Y}= \{Y_1,\dots, Y_{k}\}$ be a collection of subsets of $E$ and $w\colon E \to \mathbb R_{\ge 0}$ be a weight function. 
We define $\dsum{\mathcal Y} = \sum_{1 \le i < j \le k} d_w(Y_i, Y_j)$.

Our problem \MSDC{} is defined as follows.

\begin{definition}[\MSDC{}]
    Given a finite set $E$, an integer $k$, a weight function $w\colon E \to \mathbb R_{\ge 0}$, and a membership oracle for $\mathcal X \subseteq 2^E$,
    the task of \MSDC{} is to find a set $\mathcal Y = \set{Y_1, Y_2, \ldots, Y_k}$ of $k$ distinct subsets $Y_1, Y_2, \ldots, Y_k \in \mathcal X$ that maximizes the sum diversity $\dsum{\mathcal Y}$.
\end{definition}

Each set in $\mathcal X$ is called a \emph{feasible solution}.
In \MSDC{}, the set $\mathcal X$ of feasible solutions is not given explicitly, while we can test whether a set $X \subseteq E$ belongs to $\mathcal X$.
Our problem \MSDC{} is highly related to the problem of packing disjoint feasible solutions.
\begin{observation}\label{obs:packing}
    Suppose that all sets in $\mathcal X$ have the same cardinality $r$ and $w(e) = 1$ for $e \in E$.
    Let $Y_1, Y_2, \ldots, Y_k \in \mathcal X$ be $k$ distinct subsets.
    Then, $\dsum{\set{Y_1, \ldots, Y_k}} \ge kr(k - 1)$ if and only if $Y_i \cap Y_j = \emptyset$ for $1 \le i < j \le k$.
\end{observation}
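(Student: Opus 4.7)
The plan is to reduce the claim to a straightforward identity about symmetric differences of equal-sized sets. With $w \equiv 1$, the weighted Hamming distance collapses to the ordinary symmetric-difference size, so I would first observe that $d_w(Y_i,Y_j) = |Y_i \mathbin{\triangle} Y_j|$ for every pair.

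Next I would use the standard identity $|Y_i \mathbin{\triangle} Y_j| = |Y_i| + |Y_j| - 2\,|Y_i \cap Y_j|$, which under the assumption $|Y_i| = |Y_j| = r$ simplifies to $2r - 2\,|Y_i \cap Y_j|$. Summing over all $\binom{k}{2}$ unordered pairs then gives
\[
\dsum{\set{Y_1,\ldots,Y_k}} = 2r \binom{k}{2} - 2 \sum_{1 \le i < j \le k} |Y_i \cap Y_j| = kr(k-1) - 2 \sum_{1 \le i < j \le k} |Y_i \cap Y_j|.
\]

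From this identity the equivalence is immediate in both directions. Since every $|Y_i \cap Y_j|$ is a nonnegative integer, the subtracted term is $\ge 0$, so $\dsum{\set{Y_1,\ldots,Y_k}} \le kr(k-1)$ unconditionally. Hence the inequality $\dsum{\set{Y_1,\ldots,Y_k}} \ge kr(k-1)$ can hold only with equality, which forces $\sum_{i<j} |Y_i \cap Y_j| = 0$ and therefore $Y_i \cap Y_j = \emptyset$ for all $i < j$. Conversely, if all pairwise intersections are empty, the same identity yields exactly $\dsum{\set{Y_1,\ldots,Y_k}} = kr(k-1)$.

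There is no real obstacle here: the statement is essentially a one-line reformulation once the symmetric-difference identity is written down. The only subtle point worth flagging is that the inequality in the statement is attained at its upper bound, so "$\ge$" is really "$=$" in disguise, and the equivalence rests on the nonnegativity of intersection sizes rather than on any combinatorial structure of $\mathcal{X}$.
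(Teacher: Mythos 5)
Your proof is correct; the paper states this as an observation without giving a proof, and your argument via the identity $|Y_i \mathbin{\triangle} Y_j| = 2r - 2|Y_i \cap Y_j|$ summed over pairs is exactly the intended one-line justification. Your remark that the stated ``$\ge$'' is attained only with equality is a worthwhile clarification that the paper leaves implicit.
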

This observation implies several hardness results of \MSDC, which will be discussed in \Cref{sec:appl}.

We particularly focus on the approximability of \MSDC{} for specific sets of feasible solutions.
For a maximization problem, we say that an approximation algorithm has factor $0 < \alpha \le 1$ if given an instance $I$, the algorithm outputs a solution with objective value ${\rm ALG}(I)$ such that ${\rm ALG}(I) / {\rm OPT}(I) \ge \alpha$, where ${\rm OPT}(I)$ is the optimal value for $I$.
A \emph{polynomial-time approximation scheme (PTAS)} is an approximation algorithm that takes an instance $I$ and a constant $\varepsilon > 0$, the algorithm outputs a solution with ${\rm ALG(I) / {\rm OPT}(I)} \ge 1 - \varepsilon$ in polynomial time.

\newcommand{\MSD}{\textsc{Max-Sum Diversification}}
\subsection{A technique for \MSD }\label{sec:ref}

Our framework is based on approximation algorithms for a similar problem \MSD{}.
Let $X$ be a set, and let $d \colon X \times X \to \mathbb R_{\ge 0}$ be a metric.
In what follows, for $Y \subseteq X$, we denote $\sum_{x, y \in Y} d(x, y)$ as $d(Y)$.

\begin{definition}[\MSD{}]
    Given a metric $d\colon X \times X \to \mathbb R_{\ge 0}$ on a finite set $X$ and an integer $k$,
    the task of \MSD{} is to find a subset $Y \subseteq X$ with $|Y| = k$ that maximizes $d(Y)$.
\end{definition}

\MSD{} is studied under various names such as \textsc{MAX-AVG Facility Dispersion} and \textsc{Remote-clique}~\cite{Cevallos:MOR:2019,Ravi:OR:1994}.
\MSD{} is known to be \NP-hard~\cite{Ravi:OR:1994}.
Cevallos et al.~\cite{Cevallos:MOR:2019} devised a PTAS for \MSD.
Their algorithm is based on a rather simple local search technique, but their analysis of the approximation factor and the iteration bound are highly nontrivial. 
Our framework is based on their algorithm, which is briefly sketched below.

\begin{algorithm}[t]
    \caption{A $(1-2/k)$-approximation algorithm for \MSD{}.}
    \label{algo:local}
    \Procedure{\Local{$X, d, k$}}{
        $Y \gets $ arbitrary $k$ elements in $X$\label{algo:init}\;
        \For{$i = 1, \ldots, \lceil \frac{k(k-1)}{(k+1)} \ln (\frac{(k+2)(k-1)^2}{4}) \rceil$}{
            \If{$\exists$ pair $(x, y) \in (X \setminus Y)\times Y$ such that $d(Y - y + x) > d(Y)$}{
                $(x, y) \gets \underset{(x, y)\in (X\setminus Y) \times Y}{\arg\max}d(Y - y + x)$\label{algo:exchange}\;
                $Y \gets Y - y + x$\;
            }
        }
    Output $Y$\;
    }
\end{algorithm}

A pseudocode of the algorithm due to \cite{Cevallos:MOR:2019} is given in \Cref{algo:local}.
In this algorithm, we first pick an arbitrary set of $k$ elements in $X$, which is denoted by $Y \subseteq X$.
Then, we find a pair of elements $x \in X \setminus Y$ and $y \in Y$ that maximizes $d(Y - y + x)$ and update $Y$ by $Y - y + x$ if $d(Y - y + x) > d(Y)$.
We repeat this update procedure $\lceil \frac{k(k-1)}{(k+1)} \ln (\frac{(k+2)(k-1)^2}{4}) \rceil = \order{k \log k}$ times.
Since we can find a pair $(x, y)$ in $\order{\size{X}k\tau}$ time, where $\tau$ is the running time to evaluate the distance function $d(x, y)$ for $x, y \in X$, the following lemma holds.
\begin{lemma}\label{lem:time:local}
    \Cref{algo:local} runs in time $\order{\size{X}k^2 \tau \log k}$.
\end{lemma}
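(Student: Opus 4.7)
My plan is to bound the work done in a single iteration of the outer loop of \Cref{algo:local} by $O(|X| k \tau)$ and then multiply by the iteration count, which the pseudocode fixes at $\lceil \frac{k(k-1)}{k+1}\ln\frac{(k+2)(k-1)^2}{4}\rceil = O(k \log k)$. Thus the work essentially reduces to analysing one call to the inner search-and-swap step.

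A naive evaluation of $d(Y - y + x)$ from scratch expands into $\Theta(k^2)$ distance queries per candidate pair, which would overshoot the target. To avoid this, I would maintain, alongside $Y$, the auxiliary array
\[
    f(u) := \sum_{z \in Y} d(u, z) \quad \text{for every } u \in X.
\]
Building $f$ once, at the start of a given iteration, costs $O(|X| k \tau)$ time.

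The central observation is that, for every $x \in X \setminus Y$ and $y \in Y$, the quantity $d(Y - y + x) - d(Y)$ depends only on $f(x)$, $f(y)$, and $d(x, y)$. Indeed, going from $Y$ to $Y - y + x$ deletes the pairwise distances between $y$ and each of the $k - 1$ other members of $Y$, whose total is exactly $f(y)$, and adds the distances between $x$ and the $k - 1$ surviving elements, whose total is $f(x) - d(x, y)$; hence
\[
    d(Y - y + x) = d(Y) + f(x) - f(y) - d(x, y)
\]
(up to the constant factor fixed by the convention of ordered versus unordered pairs in the definition of $d(Y)$). Given $f$, each of the $(|X| - k)\cdot k = O(|X| k)$ candidate swaps can therefore be scored in $O(\tau)$ time, so the \argmax{} in line \ref{algo:exchange} is computed in $O(|X| k \tau)$ per iteration. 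Whenever a swap $(x, y)$ is actually performed, refreshing $f$ in place by the update $f(u) \leftarrow f(u) + d(u, x) - d(u, y)$ for every $u \in X$ costs $O(|X| \tau)$ and is dominated by the search step.

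Summing over the $O(k \log k)$ iterations yields the claimed $O(|X| k^2 \tau \log k)$ bound; the one-off initial computation of $d(Y)$ on the arbitrary starting set takes $O(k^2 \tau)$ and is absorbed into the first iteration's budget. The argument is essentially a direct ledger of per-iteration work, and I do not foresee a genuine obstacle beyond keeping the auxiliary array $f$ consistent with $Y$ across successive swaps.
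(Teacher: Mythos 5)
Your proof is correct and follows the same structure as the paper's: bound one iteration by $O(|X|k\tau)$ and multiply by the $O(k\log k)$ iteration count fixed in the pseudocode. The paper simply asserts that a best pair $(x,y)$ can be found in $O(|X|k\tau)$ time without saying how; your auxiliary array $f(u)=\sum_{z\in Y}d(u,z)$ and the identity $d(Y-y+x)=d(Y)+f(x)-f(y)-d(x,y)$ supply exactly the bookkeeping needed to justify that claim, so your write-up is, if anything, more complete.
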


They showed that if the metric $d$ is \emph{negative type}, then the approximation ratio of \Cref{algo:local} is at least $1-2/k$~\cite{Cevallos:MOR:2019}.
The metric $d$ is \emph{negative type} if for any coefficient $b \in \mathbb R^{\size{X}}$ such that $\sum_{b_i \in b}b_i = 0$, $\sum_{x_i \in X}\sum_{x_j \in X}b_ib_jd(x_i, x_j) \le 0$.
It is known that every $\ell_1$-metric is negative type~\cite{Deza:1997,Cevallos:SoCG:2016}.

\begin{theorem}[\cite{Cevallos:MOR:2019}]\label{thm:cevallos}
    If $d\colon X \times X \to \mathbb R_{\ge 0}$ is a negative type metric, then the approximation ratio of \Cref{algo:local} is $1 - 2/k$.
\end{theorem}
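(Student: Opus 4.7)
The plan is to exploit the structural characterization of negative type metrics as squared Euclidean distances: $d$ is negative type if and only if there is an embedding $\phi\colon X \to \mathbb R^m$ with $d(x,y) = \|\phi(x)-\phi(y)\|_2^2$. Under this embedding, for any $k$-subset $Y \subseteq X$ with centroid $\bar{\phi}_Y = \frac{1}{k}\sum_{y \in Y}\phi(y)$, one has the identity $d(Y) = k\sum_{y \in Y}\|\phi(y) - \bar{\phi}_Y\|_2^2$, which recasts the objective as $k$ times the total $\ell_2^2$ variance of $\phi(Y)$.

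First, I would establish the approximation guarantee assuming the algorithm reaches an exact local optimum $Y$. Fix an optimum $Y^*$ and consider a bijection $\sigma\colon Y \to Y^*$; local optimality gives $d(Y - y + \sigma(y)) \le d(Y)$ for every $y \in Y$. Expanding each of these $k$ inequalities in the $\ell_2^2$ embedding and summing them should, after collecting the centroid-drift terms, yield $d(Y^*) - d(Y) \le \tfrac{2}{k}\,d(Y^*)$, equivalently $d(Y) \ge (1 - 2/k)\,d(Y^*)$. The key property being used is that squared Euclidean distance obeys a parallelogram-style identity that lets one ``charge'' the effect of swapping a single element in terms of its displacement from the current centroid, which is what forces the tight $2/k$ loss.

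Second, I would handle the bounded iteration count. Since the algorithm stops after $\Theta(k \log k)$ rounds rather than at a true local optimum, I would argue that whenever $d(Y) < (1 - 2/k)\,d(Y^*)$ the greedy swap in line~5 shrinks the residual gap by a factor of $1 - \Omega(1/k^2)$: some swap from the matching used in the local-optimum analysis must contribute enough improvement, and the algorithm picks the best one. Combined with a trivial starting bound showing that the initial gap is at most polynomially larger than $d(Y^*)$, geometric decay over $\Theta(k \log k)$ steps brings the algorithm's output to within a $(1 - 2/k)$-factor of $d(Y^*)$, matching the prescribed iteration count.

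The main obstacle I anticipate is pinning down the constant in the local-optimum inequality. A direct exchange over an arbitrary bijection $\sigma$ produces a bound strictly weaker than $1 - 2/k$; recovering the sharp factor seems to require either averaging the $k$ swap inequalities over all bijections so that the quadratic cross-terms cancel by virtue of the negative type property, or choosing $\sigma$ as an optimal matching under $d$ so that no swap contributes slack. A secondary difficulty is making the geometric-convergence step tight enough that the very specific iteration count $\lceil \frac{k(k-1)}{k+1} \ln \frac{(k+2)(k-1)^2}{4}\rceil$ suffices; this again depends on the squared Euclidean structure rather than only on the metric axioms, and so rests on the same embedding used to prove the local-optimum bound.
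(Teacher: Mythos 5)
The paper does not prove this statement at all: Theorem~\ref{thm:cevallos} is imported verbatim from Cevallos et al.\ \citep{Cevallos:MOR:2019}, and the authors explicitly decline even to define negative type metrics, so there is no in-paper proof to compare against. Judged on its own, your proposal correctly identifies the machinery that the cited proof actually uses --- Schoenberg's characterization of negative type as squared Euclidean embeddability, the identity $d(Y)=k\sum_{y\in Y}\lVert\phi(y)-\bar\phi_Y\rVert_2^2$, an exchange argument against the optimum at a local optimum, and a separate geometric-convergence argument to handle the truncated iteration count. That is the right skeleton.

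However, as written this is a plan rather than a proof, and the two steps you leave open are precisely the hard ones. First, the central inequality $d(Y^*)-d(Y)\le\tfrac{2}{k}\,d(Y^*)$ at a local optimum is asserted, not derived; you yourself note that an arbitrary bijection gives something strictly weaker, and the mechanism by which the quadratic cross-terms cancel (averaging over swaps and invoking the centroid identity) is exactly the nontrivial content of the theorem. Note also that for the truncated algorithm to still achieve $1-2/k$, the exact local-optimum guarantee must be strictly \emph{better} than $1-2/k$, leaving slack to absorb the residual gap --- your outline does not account for this. Second, your claimed per-iteration shrinkage factor of $1-\Omega(1/k^2)$ is quantitatively inconsistent with the iteration budget: with only $\Theta(k\log k)$ rounds, a $1-\Omega(1/k^2)$ decay reduces the gap by a factor of only $e^{-\Theta(\log k/k)}=1-o(1)$, which proves nothing. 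The stated bound $\lceil\frac{k(k-1)}{k+1}\ln\frac{(k+2)(k-1)^2}{4}\rceil$ only works if each best-improvement swap closes a $\Theta(1/k)$ fraction of the remaining gap, so that the gap decays to a $\Theta(1/k^3)$ fraction of its initial value. Both gaps would need to be filled before this counts as a proof.
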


They further observed that the above theorem implies that \MSD{} admits a PTAS as follows.
Let $\varepsilon$ be a positive constant.
When $\varepsilon < 2/k$, that is, $k < 2/\varepsilon$, then $k$ is constant. Thus, we can solve \MSD{} in time $\size{X}^{O(1/\varepsilon)}$ by using a brute-force search.
Otherwise, the above $(1-2/k)$-approximation algorithm achieves factor $1-\varepsilon$.
Thus, \MSD{} admits a PTAS, provided that $d$ is a negative type metric.

\begin{corollary}[\cite{Cevallos:MOR:2019}]\label{cor:MSD:ptas}
    If $d\colon X \times X \to \mathbb R_{\ge 0}$ is a negative type metric, then \MSD{} admits a PTAS. 
\end{corollary}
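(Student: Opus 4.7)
The plan is to follow the two-case dichotomy sketched in the paragraph immediately preceding the corollary, turning it into a clean proof. Given an instance $(X, d, k)$ and a constant $\varepsilon > 0$, I would branch on whether $k$ is ``small'' (comparable to $1/\varepsilon$) or ``large'' (much bigger than $1/\varepsilon$), using brute force in the first case and \Cref{algo:local} in the second.

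First I would handle the small-$k$ case: if $k \le 2/\varepsilon$, then $k$ is a constant (depending only on $\varepsilon$), so I can simply enumerate all $\binom{|X|}{k}$ size-$k$ subsets of $X$, evaluate $d(Y)$ on each (which takes $O(k^2 \tau)$ time per subset, where $\tau$ bounds the cost of evaluating $d$), and return the maximizer. This exact brute-force algorithm runs in time $|X|^{O(1/\varepsilon)}$, which is polynomial in $|X|$ for every fixed $\varepsilon$, and trivially achieves approximation ratio $1$, hence $\ge 1 - \varepsilon$.

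Next I would handle the large-$k$ case: if $k > 2/\varepsilon$, then $2/k < \varepsilon$, so $1 - 2/k > 1 - \varepsilon$. Since $d$ is assumed to be a negative type metric, \Cref{thm:cevallos} applies to \Cref{algo:local}, giving a solution with approximation ratio at least $1 - 2/k > 1 - \varepsilon$. By \Cref{lem:time:local}, this runs in time $O(|X| k^2 \tau \log k)$, which is polynomial in the input size. Combining the two cases, for every constant $\varepsilon > 0$ we obtain a polynomial-time algorithm with approximation ratio at least $1 - \varepsilon$, which is exactly the definition of a PTAS.

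I do not expect any significant obstacle here: all heavy lifting is already in \Cref{thm:cevallos} and \Cref{lem:time:local}. The only mildly delicate point is to check that, in the small-$k$ case, the running time $|X|^{O(1/\varepsilon)}$ is genuinely polynomial in $|X|$ for fixed $\varepsilon$ (which it is, by definition of a PTAS, since $\varepsilon$ is treated as a constant and need not appear in the polynomial degree bound uniformly). No additional properties of $d$ beyond being a negative type metric are invoked, so the argument transfers directly from \Cref{thm:cevallos}.
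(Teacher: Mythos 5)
Your proposal is correct and follows essentially the same argument as the paper: the paragraph preceding the corollary gives exactly this dichotomy, using brute force in time $|X|^{O(1/\epsilon)}$ when $k < 2/\epsilon$ and invoking \Cref{thm:cevallos} with \Cref{algo:local} otherwise. No discrepancies to note.
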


\section{A framework for finding diverse solutions without duplications}\label{sec:framework}
In this section, we propose a framework for designing approximation algorithms for \MSDC{}.
The basic strategy of our framework is the local search algorithm described in the previous section.
Let $E$ be a finite set and let $\mathcal X \subseteq 2^E$ be a set of feasible solutions.
We set $X \coloneqq \mathcal X$ and apply the local search algorithm for \MSD{} to $(X, d_w, k)$.
Recall that our diversity measure $d_{\rm sum}$ is the sum of weighted Hamming distances $d_w$.
Moreover, $d_w$ is an $\ell_1$-metric, as observed in the previous section.
By \Cref{thm:cevallos}, the local search algorithm for \MSD{} has an approximation factor $1 - 2/k$.
However, the running time of a straightforward application of \Cref{lem:time:local} is $O(\size{\mathcal X}\cdot\size{E}k^2\log k)$ even if the feasible solutions in $\mathcal X$ can be enumerated in $\order{\size{\mathcal X}\cdot\size{E}}$ total time, which may be exponential in the input size $\size{E}$.

A main obstacle to applying the local search algorithm is that from a current set $\mathcal Y = \set{Y_1, \ldots, Y_k}$ of feasible solutions, we need to find a pair of feasible solutions $(X, Y) \in (\mathcal X \setminus \mathcal Y) \times \mathcal Y$ maximizing $\dsum{\mathcal Y - Y + X}$.
To overcome this obstacle, we exploit \emph{top-$k$ enumeration algorithms}.
Let $w' \colon E \to \mathbb R$ be a weight function.
An algorithm $\mathcal A$ is called a \emph{top-$k$ enumeration algorithm for $(E, \mathcal X, w', k)$} if for a positive integer $k$, $\mathcal A$ finds $k$ feasible solutions $Y_1, \ldots, Y_k \in \mathcal X$ such that for any $Y \in \set{Y_1, \ldots, Y_k}$ and $X \in \mathcal X \setminus \set{Y_1, \ldots, Y_k}$, $w'(X) \le w'(Y)$ holds.
By using $\mathcal A$, we can compute the pair $(X, Y)$ as follows.

We first guess $Y \in \mathcal Y$ in the pair $(X, Y)$ and let $\mathcal Y' = \mathcal Y \setminus \{Y\}$.
To find the pair $(X, Y)$, it suffices to find $X \in \mathcal X \setminus \mathcal Y'$ that maximizes $\sum_{Y' \in \mathcal Y'} w(X \triangle Y')$.
For an element $e \in E$, we define a new weight $w'(e) \coloneqq w(e)(\mathit{Ex}(e, \mathcal Y') - \mathit{In}(e, \mathcal Y'))$, where $\mathit{In}(e, \mathcal Y')$ (resp. $\mathit{Ex}(e, \mathcal Y')$) is the number of feasible solutions in $\mathcal Y'$ that contain $e$ (resp. do not contain $e$).
For notational convenience, we fix $\mathcal Y'$ and write $\mathit{In}(e)$ and $\mathit{Ex}(e)$ to denote $\mathit{In}(e, \mathcal Y')$ and $\mathit{Ex}(e, \mathcal Y')$, respectively.
The following lemma shows that
a feasible solution $X$ that maximizes $w'(X)$ also maximizes $\sum_{Y' \in \mathcal Y'}w(X \mathbin{\triangle} Y')$.

\begin{lemma}\label{lem:weight}
    For any feasible solution $X \in \mathcal X$, $\sum_{Y' \in \mathcal Y'}w(X \mathbin{\triangle} Y') = w'(X) + \sum_{e \in E}w(e)\cdot\mathit{In}(e)$.
\end{lemma}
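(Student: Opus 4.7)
The plan is to prove this by a routine double-counting argument that rewrites the left-hand side element by element and compares it to the expansion of the right-hand side. The identity is essentially a rearrangement, so the main task is bookkeeping rather than finding a clever idea.

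First, I would expand the left-hand side by distributing the weight function over the symmetric difference and swapping the order of summation:
\[
   \sum_{Y' \in \mathcal Y'} w(X \mathbin{\triangle} Y')
   = \sum_{Y' \in \mathcal Y'}\sum_{e \in X \mathbin{\triangle} Y'} w(e)
   = \sum_{e \in E} w(e) \cdot \bigl|\{Y' \in \mathcal Y' : e \in X \mathbin{\triangle} Y'\}\bigr|.
\]
Next, I would do a case split on whether $e \in X$. If $e \in X$, then $e \in X \mathbin{\triangle} Y'$ exactly when $e \notin Y'$, so the count equals $\mathit{Ex}(e)$; if $e \notin X$, then $e \in X \mathbin{\triangle} Y'$ exactly when $e \in Y'$, so the count equals $\mathit{In}(e)$. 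This gives
\[
    \sum_{Y' \in \mathcal Y'} w(X \mathbin{\triangle} Y')
    = \sum_{e \in X} w(e)\,\mathit{Ex}(e) + \sum_{e \in E \setminus X} w(e)\,\mathit{In}(e).
\]

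Then I would expand the right-hand side using the definition $w'(e) = w(e)(\mathit{Ex}(e) - \mathit{In}(e))$ and split $\sum_{e \in E} w(e)\,\mathit{In}(e)$ according to whether $e \in X$:
\[
    w'(X) + \sum_{e \in E} w(e)\,\mathit{In}(e)
    = \sum_{e \in X} w(e)\bigl(\mathit{Ex}(e) - \mathit{In}(e)\bigr) + \sum_{e \in X} w(e)\,\mathit{In}(e) + \sum_{e \in E \setminus X} w(e)\,\mathit{In}(e),
\]
and the $\mathit{In}(e)$ terms over $e \in X$ cancel, leaving exactly the expression derived for the left-hand side. Comparing the two closes the proof.

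The hard part, such as it is, is simply to be careful with the case split on $e \in X$ versus $e \notin X$ when interpreting the indicator $[e \in X \mathbin{\triangle} Y']$; once that is handled correctly, the rest is pure algebra. No additional structural assumptions on $\mathcal X$ are used, so the identity holds for every $X \in 2^E$ (and in particular for every feasible $X$), which is exactly what is needed to reduce the local-search step to a top-$k$ enumeration call with weights $w'$.
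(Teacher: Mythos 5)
Your proposal is correct and follows essentially the same argument as the paper: both decompose $\sum_{Y' \in \mathcal Y'} w(X \mathbin{\triangle} Y')$ element by element via the case split $e \in X$ versus $e \notin X$ to obtain $\sum_{e \in X} w(e)\,\mathit{Ex}(e) + \sum_{e \in E \setminus X} w(e)\,\mathit{In}(e)$, and then match this with $w'(X) + \sum_{e \in E} w(e)\,\mathit{In}(e)$ by the same cancellation. The only cosmetic difference is that you expand both sides and meet in the middle, whereas the paper rewrites the left-hand side into the right-hand side directly.
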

\begin{proof}
    The contribution of $e \in X$ to $w(X \mathbin{\triangle} Y')$ is $w(e)$ if $e \not\in Y'$, and $0$ otherwise. 
    Thus, $e \in X$ contributes $w(e)\cdot\mathit{Ex}(e)$ to $\sum_{Y' \in \mathcal Y'}w(X \mathbin{\triangle} Y')$.
    Similarly, $e \in E \setminus X$ contributes $w(e)\cdot \mathit{In}(e)$ to $\sum_{Y' \in \mathcal Y'}w(X \mathbin{\triangle} Y')$.
    This gives us $\sum_{Y' \in \mathcal Y'}w(X \mathbin{\triangle} Y') = w'(X) + \sum_{e \in E}w(e)\cdot\mathit{In}(e)$ as follows.
    \begin{align*}
        &\sum_{Y' \in \mathcal Y'}w(X \mathbin{\triangle} Y') \\
        &=  \sum_{e \in X}w(e)\cdot \mathit{Ex}(e) + \sum_{e \in E \setminus X}w(e)\cdot \mathit{In}(e)\\
        &=  \sum_{e \in X}w(e)\cdot \mathit{Ex}(e) + \sum_{e \in E}w(e)\cdot \mathit{In}(e)  - \sum_{e \in X}w(e)\cdot \mathit{In}(e) \\
        &=  \sum_{e \in X}w(e)(\mathit{Ex}(e) - \mathit{In}(e)) + \sum_{e \in E}w(e)\cdot \mathit{In}(e) \\
        &=  w'(X) + \sum_{e \in E}w(e)\cdot \mathit{In}(e). \qed
    \end{align*}
\end{proof}

From the above lemma, we can find the pair $(X, Y)$ with a top-$k$ enumeration algorithm $\mathcal A$ for $(E, \mathcal X, w', k)$ as follows.
By~\Cref{lem:weight}, for any feasible solution $X \in \mathcal X$, $\sum_{Y' \in \mathcal Y'}w(X \mathbin{\triangle} Y') = w'(X) + \sum_{e \in E} w(e) \cdot \mathit{In}(e)$.
Since the second term does not depend on $X$, to find a feasible solution $X$ maximizing the left-hand side, it suffices to maximize $w'(X)$ subject to $X \in \mathcal X \setminus \mathcal Y'$.
The algorithm $\mathcal A$ allows us to find $k$ feasible solutions $Z_1, \ldots, Z_k$ such that $w'(Z_1) \ge \cdots \ge w'(Z_k) \ge w'(Z)$ for any feasible solution $Z$ other than $Z_1, \ldots, Z_k$.
As $|\mathcal Y'| < k$, at least one of these solutions provides such a solution $X$.

The entire algorithm is as follows.
We first find a set of $k$ distinct feasible solutions in $\mathcal X$ using the enumeration algorithm $\mathcal A$.
Then, we repeat the local update procedure described above $O(k \log k)$ times.
Suppose that $\mathcal A$ enumerates $k$ feasible solutions in time $O((|E|+k)^c)$ for some constant $c$.
Then, the entire algorithm runs in time $O((|E| + k)^c|E|k^2\log k)$ as we can compute the pair $(X, Y)$ in time $O((|E|+k)^c|E|k)$ by simply guessing $Y \in \mathcal Y$.

The approximation factor $1 - 2/k$ does not give a reasonable bound for $k = 2$.
In this case, however, we still have an approximation factor $1 / 2$ with a greedy algorithm for \MSD~\cite{Benjamin:AL:2009}, which is described as follows.
Initially, we set $\mathcal Y = \{Y_1\}$ with arbitrary $Y_1 \in \mathcal X$.
Then, we compute a feasible solution $Y_2 \in \mathcal X \setminus \mathcal Y$ maximizing $\sum_{Y \in \mathcal Y} w(Y_2 \triangle Y)$.
By~\Cref{lem:weight} and the above discussion, we can find such a solution $Y_2$ with a top-$k$ enumeration algorithm for $(E, \mathcal X, w', k)$, where $w'(e) \coloneqq w(e) \cdot (\mathit{Ex}(e, \mathcal Y) - \mathit{In}(e, \mathcal Y))$ for $e \in E$.
We repeat this $k - 1$ times so that $\mathcal Y$ contains $k$ feasible solutions.
As discussed in this section, the approximation factor of this algorithm is $1 / 2$ as in~\cite{Benjamin:AL:2009}.
Thus, the following theorem holds.

\begin{theorem}\label{thm:main-framework}
    Let $E$ be a finite set, $\mathcal X \subseteq 2^E$, and $w\colon E \to \mathbb R_{> 0}$.
    Suppose that there is  a top-$k$ enumeration algorithm for $(E, \mathcal X, w', k)$ that runs in $\order{(\size{E} + k)^c}$ time, where 
    $w'\colon E\to \mathbb R$ is an arbitrary weight function.
    Then, there is a $\max(1 - 2/k, 1/2)$-approximation algorithm for \MSDC{} that runs in $\order{(\size{E} + k)^{c} |E| k^2\log k}$ time.
    Moreover, if there is a polynomial-time exact algorithm for \MSDC{} for constant $k$, then it admits a PTAS.
\end{theorem}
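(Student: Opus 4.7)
The plan is to instantiate the local search algorithm (\Cref{algo:local}) of Cevallos et al.\ on the metric space $(\mathcal X, d_w)$. Since $d_w$ is an $\ell_1$-metric and every $\ell_1$-metric is of negative type, \Cref{thm:cevallos} immediately yields approximation ratio $1 - 2/k$, provided that the algorithm can actually be executed in the stated time bound. So the entire proof reduces to implementing one local search step and the initialization efficiently using only the top-$k$ enumeration oracle $\mathcal A$.

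First I would describe initialization: run $\mathcal A$ once (with, say, $w' \equiv 0$) to obtain $k$ distinct feasible solutions $\mathcal Y \subseteq \mathcal X$ in time $O((|E|+k)^c)$. Next I would implement one local search step. The step requires finding the pair $(X, Y) \in (\mathcal X \setminus \mathcal Y) \times \mathcal Y$ maximizing $\dsum{\mathcal Y - Y + X}$. I guess $Y \in \mathcal Y$ in $k$ ways; for each guess, let $\mathcal Y' = \mathcal Y \setminus \{Y\}$ and apply \Cref{lem:weight} with $w'(e) = w(e)(\mathit{Ex}(e,\mathcal Y') - \mathit{In}(e,\mathcal Y'))$. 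Since the additive term $\sum_e w(e)\cdot\mathit{In}(e)$ is independent of $X$, maximizing $\sum_{Y' \in \mathcal Y'} w(X \triangle Y')$ over $X \in \mathcal X \setminus \mathcal Y'$ is equivalent to maximizing $w'(X)$ over the same domain. Calling $\mathcal A$ on $(E, \mathcal X, w', k)$ produces $k$ top-weighted feasible solutions; since $|\mathcal Y'| = k-1$, at least one of the returned solutions lies outside $\mathcal Y'$ and is an optimal choice of $X$. Evaluating $\dsum{\mathcal Y - Y + X}$ costs $O(|E| k)$, and the whole step costs $O((|E|+k)^c |E| k)$. Performing $O(k \log k)$ steps gives the claimed $O((|E|+k)^c |E| k^2 \log k)$ bound and the $1 - 2/k$ ratio.

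For the boundary case $k = 2$, where $1 - 2/k = 0$ is vacuous, I would invoke the greedy alternative sketched in the paragraph preceding the theorem: pick $Y_1 \in \mathcal X$ arbitrarily, then apply the same \Cref{lem:weight} reduction with $\mathcal Y = \{Y_1\}$ to find $Y_2$ maximizing $w(Y_2 \triangle Y_1)$ via one call to $\mathcal A$. By the analysis of Birnbaum and Goldman this yields factor $1/2$, so overall we achieve $\alpha = \max(1 - 2/k, 1/2)$. Finally, for the PTAS claim, given constant $\varepsilon > 0$: if $k < 2/\varepsilon$ then $k$ is a constant and the assumed polynomial-time exact algorithm for constant $k$ solves the instance optimally in polynomial time; otherwise $k \geq 2/\varepsilon$, so $1 - 2/k \geq 1 - \varepsilon$ and the local search algorithm delivers factor $1 - \varepsilon$.

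The main subtlety, and the only nontrivial obstacle, is the reduction from "find the best exchange pair" to "call the top-$k$ enumeration oracle once per $Y$". The argument turns on the exact matching between the number $k-1$ of forbidden solutions in $\mathcal Y'$ and the parameter $k$ of the enumeration oracle, which guarantees that at least one of the returned top-weighted solutions is admissible; combined with \Cref{lem:weight}, this makes the local search step polynomially implementable despite $\mathcal X$ being potentially exponential in $|E|$.
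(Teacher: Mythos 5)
Your proposal is correct and follows essentially the same route as the paper: instantiate the Cevallos et al.\ local search on $(\mathcal X, d_w)$, implement each exchange step by guessing $Y$, reweighting via \Cref{lem:weight}, and calling the top-$k$ enumeration oracle (using $|\mathcal Y'| = k-1 < k$ to guarantee an admissible candidate), with the greedy $1/2$-approximation for $k=2$ and the constant-$k$ exact algorithm plus the $k \ge 2/\varepsilon$ case split for the PTAS. No gaps.
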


We note that the approximation factor of the framework of \Cref{thm:main-framework} is gradually improved when the number $k$ of solutions increases.
However, the dependency of the running time on $k$ is only polynomial, which allows us to find a moderate number of diverse solutions efficiently.

\section{Applications of the framework}\label{sec:appl}
To complete the description of approximation algorithms based on our framework, we need to develop top-$k$ enumeration algorithms for specific problems.
In this section, we design top-$k$ enumeration algorithms for matchings, common bases of two matroids, and interval schedulings with cardinality $r$.

Our top-$k$ enumeration algorithms are based on a well-known technique used in~\cite{Lawler:MS:1972} (also discussed in \cite{Eppstein:EA:2008}).
The key to enumeration algorithms is the following \textsc{Weighted Extension}.

\begin{definition}[\textsc{Weighted Extension}]
    Given a finite set $E$, a set of feasible solutions $\mathcal X \subseteq 2^E$ as a membership oracle, a weight function $w'\colon E \to \mathbb R$, and a pair of disjoint subsets $\mathit{In}$ and $\mathit{Ex}$ of $E$,
    the task is to find a feasible solution $X \in \mathcal X$ that satisfies $\mathit{In} \subseteq X$ and $X\cap \mathit{Ex} = \emptyset$ maximizing $w'(X)$ subject to these conditions.
\end{definition}

If we can solve the above problem in $\order{\size{E}^c}$ time, then we obtain a top-$k$ enumeration algorithm for $(E, \mathcal X, w', k)$ that runs in $\order{k\size{E}^{c+1}}$ time.

\begin{lemma}[\cite{Lawler:MS:1972}]\label{lem:topk}
    Suppose that \textsc{Weighted Extension} for $(E, \mathcal X, w', k)$ can be solved in $\order{\size{E}^c}$ time. 
    Then, there is an $\order{k\size{E}^{c+1}}$-time top-$k$ enumeration algorithm for $(E, \mathcal X, w', k)$.
\end{lemma}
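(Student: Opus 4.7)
The plan is to implement the classical binary-partition scheme of Lawler. We maintain a collection $\mathcal P$ of subproblems, each specified by a pair $(\mathit{In},\mathit{Ex})$ of disjoint subsets of $E$; a subproblem represents the set of feasible solutions $X\in\mathcal X$ with $\mathit{In}\subseteq X$ and $X\cap \mathit{Ex}=\emptyset$. For every subproblem we keep a record of its maximizer, obtained by one call to \textsc{Weighted Extension}. Initially $\mathcal P$ contains the single subproblem $(\emptyset,\emptyset)$, whose maximizer $Y_1$ is the globally best feasible solution. The algorithm then repeats the following for $j=1,\ldots,k-1$: select from $\mathcal P$ the subproblem whose stored maximizer $Y_j$ has the largest $w'$-value, output $Y_j$, and \emph{refine} this subproblem around $Y_j$ as described next.

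Suppose the chosen subproblem is $(\mathit{In},\mathit{Ex})$ with maximizer $Y_j$, and enumerate the elements of $Y_j\setminus \mathit{In}$ in some fixed order $e_1,e_2,\ldots,e_m$. Replace $(\mathit{In},\mathit{Ex})$ in $\mathcal P$ by the subproblems
\[
 (\mathit{In}\cup\{e_1,\ldots,e_{i-1}\},\,\mathit{Ex}\cup\{e_i\}),\qquad i=1,\ldots,m,
\]
and for each one run \textsc{Weighted Extension} to compute (and store) its maximizer; subproblems with no feasible solution are simply discarded. The key combinatorial fact, which I would verify by a short case analysis on the first index $i$ where a solution $X\neq Y_j$ in the old subproblem disagrees with $Y_j$, is that these new subproblems form a \emph{partition} of $\{X\in\mathcal X : \mathit{In}\subseteq X,\; X\cap\mathit{Ex}=\emptyset\}\setminus\{Y_j\}$: every such $X$ belongs to exactly one of them. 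Consequently the union of all subproblems in $\mathcal P$, together with the already-output solutions $Y_1,\ldots,Y_j$, always covers $\mathcal X$ exactly.

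Correctness of the output order then follows by induction: when we extract the subproblem with the largest stored $w'$-value, its stored maximizer is optimal within its subproblem, and by the partition invariant no feasible solution outside of $\{Y_1,\ldots,Y_{j-1}\}$ can have larger $w'$-value, so the extracted solution is indeed the $j$-th best. For the running time, each iteration produces at most $|E|$ new subproblems and therefore triggers at most $|E|$ calls to \textsc{Weighted Extension}, each costing $O(|E|^c)$. Over $k$ iterations this gives $O(k|E|^{c+1})$ time, as required; the priority-queue overhead (e.g.\ with a binary heap storing at most $O(k|E|)$ entries) is absorbed into this bound.

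The main obstacle I anticipate is the partition claim in the second paragraph: one must argue both that the new subproblems are pairwise disjoint (which follows from the differing forced $\mathit{Ex}$-element $e_i$) and that they are collectively exhaustive of the old subproblem minus $\{Y_j\}$ (which requires locating the smallest $i$ with $e_i\notin X$ for any competitor $X$, and checking that $X$ contains $e_1,\ldots,e_{i-1}$ because of the partition's $\mathit{In}$-constraint combined with $X\subseteq E$). Once this invariant is secured, the complexity analysis and the top-$k$ property reduce to straightforward bookkeeping.
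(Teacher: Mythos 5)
Your overall plan is the right one (Lawler's binary-partition scheme, which is exactly what the paper is invoking via the citation; the paper itself gives no proof), and your complexity accounting is correct: $k$ extraction rounds, at most $\size{E}$ new subproblems per round, one call to \textsc{Weighted Extension} per subproblem, giving $O(k\size{E}^{c+1})$. However, the partition step as you wrote it has a genuine gap. Your refinement of the subproblem $(\mathit{In},\mathit{Ex})$ around $Y_j$ only creates children that force some $e_i\in Y_j\setminus\mathit{In}$ to be \emph{excluded}. The union of these children is $\set{X : \mathit{In}\subseteq X,\ X\cap\mathit{Ex}=\emptyset,\ Y_j\not\subseteq X}$, so any feasible solution $X$ with $Y_j\subsetneq X$ is lost: for such an $X$ there is no ``smallest $i$ with $e_i\notin X$,'' which is precisely where your sketched case analysis breaks down. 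Since the lemma is stated for an arbitrary $\mathcal X\subseteq 2^E$ and an arbitrary $w'\colon E\to\mathbb R$ (a strict superset of $Y_j$ can be feasible and can have smaller $w'$-value than $Y_j$ yet still belong to the top $k$), the algorithm as written can fail to enumerate the true top-$k$ solutions.

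The standard repair is to branch on \emph{all} undecided elements rather than only those of $Y_j$: order $E\setminus(\mathit{In}\cup\mathit{Ex})$ as $g_1,\ldots,g_t$ and, for each $i$, create the child that forces $g_1,\ldots,g_{i-1}$ to agree with $Y_j$ (into $\mathit{In}$ if $g_\ell\in Y_j$, into $\mathit{Ex}$ otherwise) and forces $g_i$ to disagree with $Y_j$. Every $X\neq Y_j$ in the parent has a unique first index of disagreement and lands in exactly one child, so these at most $\size{E}$ children partition the parent minus $\set{Y_j}$, and the rest of your argument (the priority-queue induction and the time bound) goes through unchanged. You could alternatively salvage your version by adding the hypothesis that $\mathcal X$ is an antichain --- which holds in all of the paper's applications, since matchings of size $r$, common bases, minimum cuts, and $r$-schedulings all have equal cardinality --- but that is weaker than the lemma as stated.
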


\subsection{Matchings}
Matching is one of the most fundamental combinatorial objects in graphs, and the polynomial-time algorithm for computing a maximum weight matching due to \cite{Edmonds:paths:1965} is a cornerstone result in this context.
Finding diverse matchings has also been studied so far~\cite{Hanaka:AAAI:2021,Hanaka:AAAI:2022,Fomin:ISAAC:2020,Fomin:STACS:2021}.
Let $G = (V, E)$ be a graph. 
A set of edges $M$ is a \emph{matching} of $G$ if $M$ has no pair of edges that share a common endpoint. 
A matching $M$ is called a \emph{perfect matching} of $G$ if every vertex in $G$ is incident to an edge in $M$.
By using our framework, we design an approximation algorithm for finding diverse matchings.
The formal definition of the problem is as follows.
\newcommand{\MSDM}{\textsc{Diverse Matchings}}
\begin{definition}[\MSDM{}]
    Given a graph $G = (V, E)$, a weight function $w\colon E\to \mathbb R_{> 0}$, and integers $k$ and $r$,
    the task of \MSDM{} is to find $k$ distinct matchings $M_1, \ldots, M_k$ of cardinality at least $r$ that maximize $\dsum{\{M_1, \ldots, M_k\}}$.
\end{definition}

To apply our framework, it suffices to show that \textsc{Weighted Extension} for matchings can be solved in polynomial time.
Our method is similar to a reduction from the maximum weight perfect matching problem to the maximum weight matching problem~\cite{10.1145/2529989}.
Let $\mathit{In}, \mathit{Ex} \subseteq E$ be disjoint subsets of edges and let $w'\colon E \to \mathbb R$.
Then, our goal is to find a matching $M$ of $G$ with $|M| \ge r$ such that $\textit{In} \subseteq M$ and $\textit{Ex} \cap M = \emptyset$, and $M$ maximizes $w'(M)$ subject to these constraints.
By guessing the cardinality of $M$, it suffices to find such a matching $M$ with cardinality exactly $r$.
This problem can be reduced to that of finding a maximum weight perfect matching as follows.
We assume that $\textit{In}$ is a matching of $G$ as otherwise there is no matching containing it. 
Let $G' = (V', E')$ be the graph obtained from $G$ by removing (1) all edges in $\textit{Ex}$ and (2) all end vertices of edges in $\textit{In}$.
Then, it is easy to see that $M$ is a matching of $G$ with $\textit{In} \subseteq M$ and $\textit{Ex} \cap M = \emptyset$ if and only if $M \setminus \textit{In}$ is a matching of $G'$.
Thus, it suffices to find a maximum weight matching of cardinality exactly $r' = r - \size{\textit{In}}$ in $G'$.
To this end, we add $|V'| - 2r'$ vertices $U$ to $G'$ and add all possible edges between vertices $v \in V'$ and $u \in U$. 
The graph obtained in this way is denoted by $H = (V' \cup U, E \cup F)$, where $F = \inset{\set{u, v}}{u \in U, v \in V'}$. 
We extend the weight function $w'$ by setting $w'(f) = 0$ for $f \in F$. 
Then, the following lemma holds.

\begin{lemma}
    Let $M^*$ be a maximum weight perfect matching in $H$.
    Then, $M^* \setminus F$ is a matching of cardinality $r'$ in $G'$ such that for every cardinality-$r'$ matching $M'$ in $G'$, it holds that $w'(M') \le w'(M^* \setminus F)$.
\end{lemma}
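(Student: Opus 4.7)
The plan is to show two things: first, that a maximum weight perfect matching $M^*$ of $H$, restricted to the original edges $E'$, is a matching of $G'$ of size exactly $r'$; second, that every size-$r'$ matching of $G'$ can be extended (at no weight cost) to a perfect matching of $H$, so that no such matching of $G'$ can beat $w'(M^* \setminus F)$.

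First I would count vertices and edges in $H$. Note that $H$ has $|V'| + |U| = 2|V'| - 2r'$ vertices, so a perfect matching has $|V'| - r'$ edges. Every edge of $F$ has exactly one endpoint in $U$, and every vertex of $U$ is incident only to edges of $F$. Hence in any perfect matching $M$ of $H$, each vertex of $U$ is saturated by a distinct edge of $F$, giving $|M \cap F| = |U| = |V'| - 2r'$. Consequently $|M \cap E'| = (|V'| - r') - (|V'| - 2r') = r'$, and $M \cap E'$ is a matching of $G'$ (it inherits the matching property from $M$ and lies inside $E'$). Applying this to $M^*$ gives the first claim.

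Next I would establish the optimality statement by extension. Let $M'$ be any matching of size $r'$ in $G'$. It saturates $2r'$ vertices of $V'$, leaving $|V'| - 2r' = |U|$ vertices of $V'$ unsaturated. Because $F$ contains every edge between $V'$ and $U$, the bipartite graph between the unsaturated vertices of $V'$ and $U$ is a complete bipartite graph with equal side sizes, so it has a perfect matching $F' \subseteq F$. Then $M' \cup F'$ is a perfect matching of $H$, and since $w'(f) = 0$ for every $f \in F$, we have $w'(M' \cup F') = w'(M')$. By the maximality of $M^*$,
\[
w'(M') = w'(M' \cup F') \le w'(M^*) = w'(M^* \setminus F) + w'(M^* \cap F) = w'(M^* \setminus F),
\]
which gives the desired inequality.

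The argument is essentially bookkeeping, so I do not anticipate a real obstacle; the one place where care is needed is the vertex-count identity $|U| = |V'| - 2r'$, which is exactly what makes the extension step go through and also what forces $|M^* \cap E'| = r'$. Everything else follows from the zero-weight convention on $F$ and the completeness of the bipartite addition.
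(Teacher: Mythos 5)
Your proof is correct and follows essentially the same route as the paper's: first show that any perfect matching of $H$ uses exactly $|U|$ edges of $F$ and hence exactly $r'$ edges of $G'$, then show that any size-$r'$ matching of $G'$ extends to a perfect matching of $H$ using zero-weight edges of $F$, so it cannot exceed $w'(M^*\setminus F)$. The only difference is presentational — you argue the optimality directly while the paper argues by contradiction, and you spell out the vertex-count bookkeeping a bit more explicitly — but the substance is identical.
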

\begin{proof}
    Since $M^*$ is a perfect matching and any edge incident to $U$ is contained in $F$, $M^*$ must contain exactly $|U|$ edges of $F$.
    This implies that the perfect matching $M^*$ contains exactly $r'$ edges of $G'$.
    Suppose that there is a cardinality-$r'$ matching $M'$ in $G'$ such that $w'(M') > w'(M^* \setminus F)$.
    As every vertex in $U$ is adjacent to $V'$, we can choose exactly a set $N \subseteq F$ of $|U|$ edges between $U$ and $V'$ so that $M' \cup N$ forms a perfect matching in $H$.
    Then, we have $w'(M' \cup N) = w'(M') + w'(N) > w'(M^* \setminus F) + w'(M^* \cap F) = w'(M^*)$ as $w'(N) = w'(M^* \cap F) = 0$, contradicting the fact that $M^*$ is a maximum weight perfect matching of $H$.
%
\end{proof}

Thus, \textsc{Weighted Extension} for a matching of cardinality at least $r$ is solvable in polynomial time using a maximum weight matching algorithm~\cite{Edmonds:paths:1965}.
By~\Cref{thm:main-framework} and \Cref{lem:topk}, we have the following theorem.

\begin{theorem}
    There is a polynomial-time approximation algorithm for \MSDM{} with approximation factor $\max(1 - 2/k, 1/2)$.
\end{theorem}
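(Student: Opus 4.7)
The plan is to combine three already-proved pieces of the paper. The lemma immediately above reduces \textsc{Weighted Extension} for size-$r$ matchings of $G$ to computing a maximum weight perfect matching in the auxiliary graph $H$, which Edmonds' algorithm~\citep{Edmonds:paths:1965} solves in polynomial time. Hence \textsc{Weighted Extension} for size-$r$ matchings is solvable in $O(|E|^c)$ time for some constant $c$, for an arbitrary real-valued weight function $w'$ on the edges.

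I would then apply \Cref{lem:topk} to lift this into a polynomial-time top-$k$ enumeration algorithm for the family $\mathcal{X}$ of size-$r$ matchings of $G$, with running time $O(k|E|^{c+1})$. Since the weight function handled by this enumeration algorithm is arbitrary, it is precisely the object required as input to our framework. Moreover, the diversity measure $d_w$ is a weighted Hamming distance, hence an $\ell_1$-metric on $\mathcal{X}$, and every $\ell_1$-metric is of negative type (as noted in \Cref{sec:prel}), so all hypotheses of \Cref{thm:main-framework} are satisfied.

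Invoking \Cref{thm:main-framework} then directly yields a polynomial-time $\max(1-2/k,\,1/2)$-approximation algorithm for \MSDM{}, which is exactly the claimed statement; the overall running time follows by substituting $c$ into the bound $O((|E|+k)^c \cdot |E| \cdot k^2 \log k)$.

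No step of this pipeline introduces substantive new difficulty. The only delicate point is correctness of the padding construction used to build $H$: one must check that the $|V'|-2r'$ auxiliary vertices in $U$, connected by zero-weight edges only to $V'$, force every perfect matching of $H$ to project onto a matching of exactly the right size $r'$ in $G'$, in a weight-preserving manner modulo an additive constant. This is exactly what the preceding lemma establishes. The only edge case is infeasibility (when $|V'|<2r'$ or when $\mathit{In}$ is not itself a matching), which can be detected in advance and reported trivially.
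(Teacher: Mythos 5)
Your proposal is correct and follows exactly the paper's route: the preceding lemma gives polynomial-time \textsc{Weighted Extension} for size-$r$ matchings via the padded perfect-matching reduction, \Cref{lem:topk} lifts this to top-$k$ enumeration, and \Cref{thm:main-framework} yields the $\max(1-2/k,1/2)$ factor. Your added remarks on the negative-type property of $d_w$ and the infeasibility edge cases are consistent with, and slightly more explicit than, the paper's one-line derivation.
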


\subsection{Common bases of two matroids}
Let $E$ be a finite set and let a non-empty family of subsets $\mathcal I$ of $E$.
The pair $\mathcal M = (E, \mathcal I)$ is a \emph{matroid} if 
(1) for each $X \in \mathcal I$, every subset of $X$ is included in $\mathcal I$ and (2) if $X, Y \in \mathcal I$ and $\size{X} < \size{Y}$, then there exists an element $e \in Y \setminus X$ such that $X \cup \set{e} \in \mathcal I$.
Each set in $\mathcal I$ is called an \emph{independent set} of $\mathcal M$.
An inclusion-wise maximal independent set $I$ of $\mathcal M$ is a \emph{base} of $\mathcal M$.
Because of condition (2), all bases in $\mathcal M$ have the same cardinality.
For two matroids $\mathcal M_1 = (E, \mathcal I_1)$ and $\mathcal M_2 = (E, \mathcal I_2)$, 
a subset $X \subseteq E$ is a \emph{common base of $\mathcal M_1$ and $\mathcal M_2$} if $X$ is a base of both $\mathcal M_1$ and $\mathcal M_2$.
In this subsection, we give an approximation algorithm for diverse common bases of two matroids.

\newcommand{\DMCB}{\textsc{Diverse Matroid Common Bases}}

\begin{definition}[\DMCB]
    Given matroids $\mathcal M_1 = (E, \mathcal I_1)$ and $\mathcal M_2 = (E, \mathcal I_2)$ as membership oracles, a weight function $w\colon E \to \mathbb R_{> 0}$, and an integer $k$,
    the task of \DMCB\ is to find $k$ distinct common bases $B_1, \ldots, B_k$ of $\mathcal M_1$ and $\mathcal M_2$
    that maximize $\dsum{B_1, \ldots, B_k}$.
\end{definition}

Given two matroids $\mathcal M_1 =(E,\mathcal I_1)$ and $\mathcal M_2 =(E,\mathcal I_2)$ as membership oracles, the problem of partitioning $E$ into $k$ common bases of $\mathcal M_1$ and $\mathcal M_2$ is a notoriously hard problem, which requires an exponential number of membership queries~\cite{Berczi:MP:2021}.
This fact together with~\Cref{obs:packing} implies that \DMCB{} cannot be solved with polynomial number of membership queries in our problem setting.
Given this fact, we develop a constant-factor approximation algorithm for \DMCB{}.
To this end, we show that \textsc{Weighted Extension} for common bases of two matroids can be solved in polynomial time.

Similarly to the case of matchings, we can find a maximum weight common base $B \in \mathcal I_1 \cap \mathcal I_2$ subject to $\mathit{In} \subseteq B$ and $\mathit{Ex} \cap B = \emptyset$ for given disjoint $\mathit{In}, \mathit{Ex} \subseteq E$, which is as follows.
Let $\mathcal M = (E, \mathcal I)$ be a matroid.
For $X \subseteq E$, we let $\mathcal M \setminus X = (E \setminus X, \mathcal J)$, where $\mathcal J = \inset{J \setminus X}{J \in \mathcal I}$.
Then, $\mathcal M \setminus X$ is a matroid (see~\cite{Oxley:book:2006}).
Similarly, for $X \subseteq E$, we let $\mathcal M \slash X = (E \setminus X, \mathcal J')$, where $\mathcal J' = \inset{J}{J \cup X \in \mathcal I, J \subseteq E \setminus X}$.
Then $(E, \mathcal J)$ is also a matroid (see~\cite{Oxley:book:2006}).
For two matroids $\mathcal M_1$ and $\mathcal M_2$, we consider two matroids $\mathcal M'_1 = (\mathcal M_1 \setminus \mathit{Ex}) \slash \mathit{In}$ and $\mathcal M'_2 = (\mathcal M_2 \setminus \mathit{Ex}) \slash \mathit{In}$.
For every independent set $X$ in $\mathcal M'_1$ and $\mathcal M'_2$, 
$X$ does not contain any element in $\mathit{Ex}$ and $X \cup \mathit{In}$ is an independent set in both $\mathcal M_1$ and $\mathcal M_2$.
Thus, \textsc{Weighted Extension} can be solved by computing a maximum weight common base in $\mathcal M'_1$ and $\mathcal M'_2$, which can be solved in polynomial time (see Theorem~41.7 in \cite{Schrijver:book}).
By~\Cref{thm:main-framework} and \Cref{lem:topk}, the following theorem holds.

\begin{theorem}
    There is a polynomial-time approximation algorithm for \DMCB{} with approximation factor $\max(1 - 2/k, 1/2)$, provided that the membership oracles for $\mathcal M_1$ and $\mathcal M_2$ can be evaluated in polynomial time.
\end{theorem}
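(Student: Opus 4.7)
The plan is to combine \Cref{thm:main-framework} with \Cref{lem:topk}, so that the entire task reduces to solving \textsc{Weighted Extension} for common bases of two matroids in polynomial time. Concretely, given disjoint $\mathit{In},\mathit{Ex}\subseteq E$ and a weight function $w'\colon E\to\mathbb R$, I need a polynomial-time procedure that either returns a common base $B$ of $\mathcal M_1,\mathcal M_2$ with $\mathit{In}\subseteq B$ and $\mathit{Ex}\cap B=\emptyset$ maximizing $w'(B)$, or reports infeasibility.

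The first step is to form the two modified matroids $\mathcal M'_i \coloneqq (\mathcal M_i \setminus \mathit{Ex}) \slash \mathit{In}$ on the ground set $E' \coloneqq E\setminus(\mathit{In}\cup\mathit{Ex})$ for $i\in\{1,2\}$. These are matroids, and their membership oracles can be simulated in polynomial time using the oracles for $\mathcal M_i$: a set $J\subseteq E'$ is independent in $\mathcal M'_i$ iff $J\cup\mathit{In}$ is independent in $\mathcal M_i$, which requires a single query to the original oracle. At the outset one verifies that $\mathit{In}$ itself is independent in both matroids; if not, the instance is infeasible and the procedure halts.

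Next I would establish the correspondence already sketched in the surrounding text: the map $B \mapsto B\setminus \mathit{In}$ is a bijection between feasible $B$ for the \textsc{Weighted Extension} instance and common bases of $(\mathcal M'_1,\mathcal M'_2)$, and $w'(B) = w'(B\setminus \mathit{In}) + w'(\mathit{In})$. Since $w'(\mathit{In})$ is a constant shift over the feasible region, maximizing $w'(B)$ reduces to computing a maximum-weight common base of $(\mathcal M'_1,\mathcal M'_2)$, which is solvable in polynomial time via the classical weighted matroid intersection machinery (Theorem~41.7 of~\citep{Schrijver:book}). Chaining this with \Cref{lem:topk} yields a polynomial-time top-$k$ enumeration algorithm for the common bases of $(\mathcal M_1,\mathcal M_2)$, and \Cref{thm:main-framework} then delivers the claimed $\max(1-2/k,1/2)$-approximation.

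The main obstacle I anticipate is the rank-matching subtlety hidden inside ``common base of $(\mathcal M'_1,\mathcal M'_2)$'': for such a common base to exist the ranks of $\mathcal M'_1$ and $\mathcal M'_2$ must coincide, and if they do not, the original instance has no feasible solution and this must be detected cleanly rather than silently returning a maximum common independent set that is not a base in both. This is a standard issue in matroid intersection but is the one spot where I would need to be careful: either by running matroid intersection to maximum cardinality first and comparing to the two ranks, or by invoking a weighted-common-base oracle directly, so that infeasibility of the extension instance is correctly reported to the outer local-search loop.
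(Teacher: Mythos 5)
Your proposal follows the paper's proof essentially verbatim: delete $\mathit{Ex}$ and contract $\mathit{In}$ to form $\mathcal M'_1, \mathcal M'_2$, solve maximum-weight common base via weighted matroid intersection (Theorem~41.7 of Schrijver), and then invoke \Cref{lem:topk} and \Cref{thm:main-framework}. The extra care you flag about oracle simulation for the minors and detecting rank mismatch is a reasonable refinement of the same argument, not a different approach.
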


\subsection{Minimum cuts}\label{subsec:mincut}
\newcommand{\DMGC}{\textsc{Diverse Minimum Cuts}}

Let $G = (V, E)$ be a graph. 
A partition of $V$ into two non-empty sets $A$ and $B$ is called a \emph{cut} of $G$.
For a cut $(A, B)$ of $G$, the set of edges having one end in $A$ and the other end in $B$ is denoted by $E(A, B)$.
When no confusion arises, we may refer to $E(A, B)$ as a cut of $G$.
The \emph{size} of a cut $C = E(A, B)$ is defined by $|E(A, B)|$.
A cut $C$ is called a \emph{minimum cut} of $G$ if there is no cut $C'$ of $G$ with $|C'| < |C|$.
In this section, we consider the following problem.

\begin{definition}[\DMGC]
    Given a graph $G = (V, E)$ with an edge-weight function $w\colon E \to \mathbb R_{\ge 0}$ and an integer $k$, the task of \DMGC{} is to find $k$ distinct minimum cuts $C_1, \ldots, C_k \subseteq E$ of $G$ that maximize $\dsum{\set{C_1, \ldots, C_k}}$.
\end{definition}

An important observation for this problem is that the number of minimum cuts of any graph $G$ is $\order{|V|^2}$~\cite{Karger:ACM:2000}.
Moreover, we can enumerate all minimum cuts in a graph in polynomial time~\cite{Yeh:Algorithmica:2010,Yannakakis:ICALP:92}. 
Thus, we can solve both \textsc{Weighted Extension} for minimum cuts and \DMGC{} for constant $k$ in polynomial time, yielding a PTAS for \DMGC{}.

\begin{theorem}
    \DMGC{} admits a PTAS.
\end{theorem}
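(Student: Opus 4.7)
The plan is to verify the two sufficient conditions in \Cref{thm:main-framework} for the case of minimum cuts. The excerpt already points to the two classical facts that do all the heavy lifting: every graph has at most $O(|V|^2)$ minimum cuts~\citep{Karger:ACM:2000}, and all of them can be enumerated in polynomial time~\citep{Yeh:Algorithmica:2010,Yannakakis:ICALP:92}. Given these, the remainder is essentially bookkeeping.

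First, I would establish a polynomial-time top-$k$ enumeration algorithm for $(E, \mathcal{X}, w', k)$, where $\mathcal{X}$ is the set of minimum cuts of $G$ and $w' \colon E \to \mathbb{R}$ is the auxiliary weight supplied by the framework. Simply run the cited min-cut enumeration algorithm to obtain the full list $C_1, \ldots, C_N$ of all minimum cuts, where $N = O(|V|^2)$; compute $w'(C_i)$ for every $i$; sort; and return the $k$ cuts of largest $w'$-value. Since $N$ is polynomially bounded and each evaluation of $w'(C_i)$ is polynomial in $|E|$, the whole procedure runs in time polynomial in $|E|$ and $k$, which fits the $O((|E|+k)^c)$ bound required by the framework.

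Second, for the PTAS part I need a polynomial-time exact algorithm for \MSDC{} (specialized to \DMGC{}) whenever $k$ is a constant. Because $|\mathcal{X}| \le N = O(|V|^2)$, the number of unordered $k$-subsets of $\mathcal{X}$ is $\binom{N}{k} = O(|V|^{2k})$. For each such subset the value $\dsum{\cdot}$ is computable in polynomial time, so a brute-force enumeration over $k$-subsets yields an exact algorithm running in time $O(|V|^{2k} \cdot \mathrm{poly}(|E|, k))$, which is polynomial for every fixed $k$.

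Combining both ingredients with the ``moreover'' clause of \Cref{thm:main-framework} immediately delivers the PTAS for \DMGC{}. There is no real obstacle here; the only minor point worth flagging is that the top-$k$ enumeration demanded by the framework is with respect to the auxiliary weight $w'$ rather than the input weight $w$, but this is handled trivially because we already have the complete list of minimum cuts in hand and can re-sort it by any $w'$ on demand.
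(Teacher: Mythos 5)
Your proposal is correct and matches the paper's argument: both rest on the $O(|V|^2)$ bound on the number of minimum cuts and their polynomial-time enumerability, using the full list to get top-$k$ enumeration (the paper routes this through \textsc{Weighted Extension} and \Cref{lem:topk}, you sort the list directly, which is equivalent) and brute force over $k$-subsets for the constant-$k$ exact algorithm, then invoking the ``moreover'' clause of \Cref{thm:main-framework}.
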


Given this, it is natural to ask whether \DMGC{} admits a polynomial-time algorithm.
However, we show that \DMGC{} is \NP-hard even if $G$ has a cut of size $3$.
Let $\lambda(G)$ be the size of a minimum cut of $G$.

\begin{theorem}\label{thm:cut-hardness}
    \DMGC{} is \NP-hard even if $\lambda(G) = 3$ and $w(e) = 1$ for every edge.
\end{theorem}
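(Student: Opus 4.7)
By~\Cref{obs:packing} applied with $r = 3$ and unit edge weights, a family of $k$ distinct minimum cuts $\{C_1,\dots,C_k\}$ of a graph $G$ with $\lambda(G) = 3$ satisfies $\dsum{\{C_1,\dots,C_k\}} = 3k(k-1)$ if and only if the $C_i$ are pairwise edge-disjoint. Hence, to establish \NP-hardness of \DMGC{} under $\lambda(G) = 3$, it suffices to prove \NP-hardness of the decision problem ``does a given $G$ with $\lambda(G) = 3$ contain $k$ pairwise edge-disjoint minimum cuts?''

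The plan is to reduce from \textsc{Independent Set} on cubic graphs, which is classically \NP-hard. Given a cubic graph $H$, I would construct $G$ by replacing each vertex $v \in V(H)$ with a small $4$-edge-connected gadget $W_v$ carrying three designated port vertices, one per edge of $H$ incident to $v$, and replacing each edge $uv \in E(H)$ by a single edge between the appropriate ports of $W_u$ and $W_v$. In $G$, the ``gadget-isolating'' cut $\delta_G(V(W_v))$ has exactly three edges, and two such cuts $\delta_G(V(W_u))$ and $\delta_G(V(W_v))$ share an edge if and only if $uv \in E(H)$, so an independent set of size $k$ in $H$ yields $k$ pairwise edge-disjoint $3$-cuts of $G$, and vice versa.

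The key claim to verify is that $\lambda(G) = 3$ and every $3$-cut of $G$ is gadget-isolating. The $4$-edge-connectivity of each $W_v$ forces every gadget to lie entirely on one side of any cut of size $\le 3$ (otherwise at least four gadget-internal edges are already in the cut); each such cut therefore descends to a $\le 3$-cut of $H$ regarded as a graph on gadgets, which must be a vertex-star of $H$ provided $H$ is cyclically $4$-edge-connected. Combining this with~\Cref{obs:packing} and the reduction completes the argument.

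The main obstacle is securing cyclic $4$-edge-connectivity of $H$ while preserving \NP-hardness of \textsc{Independent Set}. I would address this either by using a known \NP-hardness result on a restricted class of cubic graphs where this property holds automatically, or by a preprocessing step that replaces each non-trivial $3$-edge-cut of the input cubic graph by an auxiliary $4$-edge-connected bundled gadget without changing the independence number. The remaining verification is a finite case analysis on how a hypothetical $3$-cut of $G$ can partition the gadgets, driven only by the $4$-edge-connectivity of $W_v$ and the way ports are coupled to the edges of $H$.
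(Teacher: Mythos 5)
Your high-level strategy coincides with the paper's: reduce \textsc{Independent Set} on cubic graphs to the existence of $k$ pairwise edge-disjoint minimum cuts and then invoke \Cref{obs:packing} with $r=3$. Where you diverge is the gadget. The paper subdivides every edge of $H$ twice and attaches an apex vertex $v^*$ to all subdivision vertices; you replace each vertex of $H$ by a $4$-edge-connected widget with three ports. Your argument that a cut of size at most $3$ cannot split a widget, and hence projects to an edge cut of $H$ of size at most $3$, is sound.

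The genuine gap is the step you flag but do not close: your reverse direction needs every $\le 3$-edge-cut of the source cubic graph $H$ to be a vertex star, i.e.\ $H$ must be $3$-edge-connected and cyclically $4$-edge-connected, and \textsc{Independent Set} must remain \NP-hard on that class. Neither of your proposed fixes is carried out. The standard hardness results for \textsc{Independent Set} are on cubic, planar cubic, or large-girth cubic graphs, none of which excludes nontrivial $3$-edge-cuts (any triangle in $H$ already gives one), so there is no off-the-shelf citation; and the preprocessing that ``repairs'' every nontrivial $3$-edge-cut is itself a nontrivial construction, since such cuts can be numerous and crossing and you must track the independence number through all replacements. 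Until this is resolved, a no-instance of \textsc{Independent Set} could map to a graph $G$ possessing $k$ edge-disjoint minimum cuts arising from nontrivial cuts of $H$, which breaks the equivalence. The paper avoids the issue entirely: after the double subdivision (which shifts $\alpha$ by exactly $m$, \Cref{lem:subdivide-is}), the apex $v^*$ adjacent to all subdivision vertices makes a direct case analysis on $|X \cap D|$ in \Cref{lem:cut-is} show that \emph{every} size-$3$ cut is a single-vertex star, with no extra connectivity hypothesis on the input cubic graph. To complete your proof you would need either a hardness result for \textsc{Independent Set} on cyclically $4$-edge-connected cubic graphs or a construction in which the uniqueness of $3$-cuts is forced by the gadgets rather than assumed of $H$.
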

\begin{proof}
    We give a reduction from the maximum independent set problem on cubic graphs, which is known to be \NP-complete~\cite{FLEISCHNER20102742}.
    For a graph $H$, we denote by $\alpha(H)$ the maximum size of an independent set of $H$.
    Let $(H, k)$ be an instance of this problem, where every vertex of $H$ has degree exactly $3$.

    Let $H'$ be the graph obtained from $H$ by subdividing each edge twice, that is, each edge $e = \set{u, v} \in E(H)$ is replaced by a path $(u, a_e, b_e, v)$ of three edges, and let $D = \set{a_e, b_e \mid e \in E(H)}$ be the set of new vertices.
    We construct a graph $G = (V, E)$ from $H'$ by adding a new vertex $v^*$ and adding an edge between $v^*$ and each vertex in $D$, and we set $w(e) = 1$ for every $e \in E$.
 Then, every vertex in $V \setminus \set{v^*} = V(H')$ has degree exactly three in $G$.
    Finally, we set $k' = k + |E(H)|$, and we show that $G$ has $k'$ distinct minimum cuts $C_1, \ldots, C_{k'} \subseteq E$ with $\dsum{\set{C_1, \ldots, C_{k'}}} \ge 3k'(k'-1)$ if and only if $\alpha(H) \ge k$.
    From the construction of $H'$, it is easy to show that $\alpha(H') \ge k + |E(H)|$ if and only if $\alpha(H) \ge k$.

    We show that every cut of $G$ of size at most three is a trivial cut $E_G(\set{v}, V \setminus \set{v})$ for some $v \in V(H')$.
    Let $C = E(X, V \setminus X)$ be a cut of $G$ with $|C| \le 3$ and, without loss of generality, $v^* \in V \setminus X$.
    Since every vertex of $X$ has degree exactly three in $G$ and since every edge of $G$ joining two vertices of $X$ is an edge of $H'$, we have $|C| = 3|X| - 2|E(H'[X])|$.
    Suppose that $H'[X]$ contains a cycle.
    Every cycle of $H'$ is obtained from a cycle of $H$ by subdividing each of its edges twice, and a cycle of the simple graph $H$ has at least three edges; hence, every cycle of $H'$ contains at least six vertices of $D$.
    Each vertex of $X \cap D$ is adjacent to $v^* \in V \setminus X$, so $|C| \ge |X \cap D| \ge 6$, a contradiction.
    Therefore, $H'[X]$ is a forest, and $|E(H'[X])| \le |X| - 1$ yields $|C| \ge 3|X| - 2(|X| - 1) = |X| + 2$.
    Together with $|C| \le 3$, we obtain $|X| = 1$, that is, $C = E_G(\set{v}, V \setminus \set{v})$ for some $v \in V(H')$, and $|C| = 3$.
    In particular, every cut of $G$ has size at least three, and hence $\lambda(G) = 3$ and the minimum cuts of $G$ are exactly the trivial cuts of the vertices in $V(H')$, which are pairwise distinct; since $|V(H')| = |V(H)| + 2|E(H)| \ge k'$, the graph $G$ always has $k'$ distinct minimum cuts.

    We show that $G$ has $k'$ distinct minimum cuts $C_1, \ldots, C_{k'}$ with $\dsum{\set{C_1, \ldots, C_{k'}}} \ge 3k'(k'-1)$ if and only if $\alpha(H') \ge k'$.
    Since every minimum cut of $G$ consists of exactly three edges, any two minimum cuts $C_i$ and $C_j$ satisfy $d_w(C_i, C_j) = |C_i \bigtriangleup C_j| = 6 - 2|C_i \cap C_j| \le 6$, with equality if and only if $C_i \cap C_j = \emptyset$; hence $\dsum{\set{C_1, \ldots, C_{k'}}} \le 3k'(k'-1)$, with equality if and only if the cuts are pairwise disjoint.
    Since every edge of $G$ not in $H'$ is incident to $v^*$, two trivial cuts $E_G(\set{u}, V \setminus \set{u})$ and $E_G(\set{v}, V \setminus \set{v})$ with $u, v \in V(H')$ are disjoint if and only if $u$ and $v$ are not adjacent in $H'$.

\end{proof}

When $\lambda(G) = 1$, then \DMGC{} is trivially solvable in linear time as the problem can be reduced to finding all the bridges in $G$.
We show that \DMGC{} can be solved in polynomial time when $\lambda(G) \le 2$.
We reduce the problem to that of finding a subgraph of prescribed size with maximizing the sum of convex functions on their degrees of vertices.

\begin{theorem}[\cite{Apollonio:Minconvex:2009}]\label{thm:deg-conv}
    Given an undirected graph $H$, an integer $k$, and convex functions $f_v: \mathbb N_{\ge 0} \to \mathbb R$ for $v \in V(H)$, the problem of finding $k$-edge subgraph $H'$ of $H$ maximizing $\sum_{v \in V(H)} f_v(d_{H'}(v))$ is solvable in polynomial time, where $d_{H'}(v)$ is the degree of $v$ in $H'$.
\end{theorem}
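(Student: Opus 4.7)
The result as stated is cited from \citep{Apollonio:Minconvex:2009}, whose title indicates that the natural formulation is the \emph{min-convex} version; the two versions correspond by replacing $f_v$ with $-f_v$, so I plan to sketch the proof for minimizing $\sum_v f_v(d_{H'}(v))$ over $k$-edge subgraphs of $H$ with convex $f_v$ (the stated max version then follows by sign reversal). The idea is to reduce the problem to a min-cost integer flow instance (equivalently, a min-cost $b$-matching instance) via a standard convex-cost slot-expansion gadget, and then invoke a polynomial-time min-cost flow subroutine.

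I construct an auxiliary network $H^{\ast}$ as follows. Replace each $v \in V(H)$ by a bundle of $\deg_H(v)$ unit-capacity ``slot'' nodes $v^{1},\ldots,v^{\deg_H(v)}$, where slot $v^{j}$ carries incremental cost $c^{v}_{j} = f_v(j) - f_v(j-1)$. For each edge $uv \in E(H)$, introduce a unit-capacity edge gadget $e_{uv}$ that, when activated, consumes exactly one slot at $u$ and one slot at $v$. Convexity of $f_v$ gives $c^{v}_{1} \le c^{v}_{2} \le \cdots$, so a standard local-exchange argument shows that in any optimal integer flow the activated slots at each $v$ form a prefix $\{v^{1},\ldots,v^{d_{H'}(v)}\}$. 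Consequently the flow cost at $v$ equals $f_v(d_{H'}(v)) - f_v(0)$, and summing over $v$ recovers $\sum_v f_v(d_{H'}(v))$ up to the constant $\sum_v f_v(0)$; conversely every $k$-edge subgraph lifts to a feasible integer flow on $H^{\ast}$ of the same cost. A min-cost integer flow of total edge activation $k$ in $H^{\ast}$ therefore solves the problem exactly.

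The main obstacle is enforcing the exact-$k$ cardinality constraint together with the degree-cost encoding inside a single min-cost flow instance without destroying integrality or convexity. The cleanest workaround is a parametric/Lagrangian approach: drop the $k$-edge constraint, add a tunable penalty $\lambda$ per activated edge gadget, and solve the resulting min-cost flow for varying $\lambda$. The number of activated edges is monotone non-increasing in $\lambda$ and changes at $O(|E(H)|)$ breakpoints, so a parametric min-cost flow computation (or a binary search combined with a small perturbation step to resolve ties) locates a value of $\lambda$ realizing exactly $k$ active edges in polynomially many calls. Since min-cost integer flow is strongly polynomial, the overall algorithm is polynomial, proving the theorem. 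An alternative route that avoids the gadget entirely is via discrete convex analysis: the degree vectors of $k$-edge subgraphs of $H$ form a constant-parity jump system, and separable convex functions can be minimized over jump systems in polynomial time, yielding the same conclusion.
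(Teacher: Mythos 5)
The paper offers no proof of this statement at all---it is imported wholesale from the cited reference---so there is nothing internal to compare against; judged on its own, your reconstruction has the right skeleton (the slot-expansion gadget with incremental costs $f_v(j)-f_v(j-1)$, made monotone by convexity, reducing separable degree costs to a matching/flow computation is indeed the standard route to this result). However, your opening sign-flip remark is backwards: negating a convex $f_v$ yields a \emph{concave} function, so ``maximize $\sum_v f_v(d_{H'}(v))$ with $f_v$ convex'' is equivalent to minimizing a sum of \emph{concave} functions, not to the minconvex problem you then prove. What you actually establish is the max-concave (equivalently, min-convex) version. That happens to be the version the paper needs---the application in Section 4.3 uses $f_e(i)=w(e)\cdot i\cdot(k-i)$, which is concave despite the text calling it convex---so your instinct about which theorem to prove is sound, but you should flag the discrepancy rather than assert a false equivalence: as literally stated, with convex $f_v$ and maximization, the claim is a different and far more dubious one.

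The second, more substantive gap is the Lagrangian treatment of the exact-$k$ constraint. Monotonicity of the number of activated edge gadgets in $\lambda$ does not imply that some $\lambda$ realizes exactly $k$: at a breakpoint the count can jump past $k$, and no ``small perturbation'' repairs this unless you additionally prove that the optimum is a convex function of $k$, or give an exchange argument showing that at a breakpoint there exist Lagrangian-optimal solutions of every intermediate cardinality. You assert neither. The cleaner, gap-free fix is to encode the cardinality constraint directly in the reduction: in the slot-expanded graph exactly $2k$ slots are consumed, so one can add dummy vertices joined by zero-cost edges to absorb the unused slots and ask for a minimum-weight \emph{perfect} matching---precisely the trick this paper itself uses for \textsc{Weighted Extension} for matchings. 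Your alternative route via jump systems needs the same care: degree sequences of all subgraphs of $H$ do form a jump system, but restricting to the slice $\sum_v x_v = 2k$ is exactly the ``prescribed size'' difficulty that constitutes the nontrivial content of the cited result, so invoking it without justification is circular.
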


We first enumerate all minimum cuts of $G$ in polynomial time.
If $G$ has no $k$ minimum cuts, then the instance is trivially infeasible.
Suppose otherwise.
We construct a graph $H$ whose vertex set corresponds to $E$, and the edge set of $H$ is defined as follows.
For each pair $e, f \in E$, we add an edge between $e$ and $f$ to $H$ if $\{e, f\}$ is a cut of $G$.
Obviously, the graph $H$ can be constructed in polynomial time.
For each $e \in E$, we let $f_e(i) \coloneqq w(e) \cdot i \cdot (k - i)$ for $0 \le i \le k$ and $f_e(i) = \infty$ for $i > k$.
Clearly, the function $f_e$ is convex.
Let $C_1, \ldots, C_k \subseteq E$ be $k$ minimum cuts of $G$.
For each $e$, we denote by $m(e)$ the number of occurrences of $e$ among $C_1, \ldots, C_k$. 
Since each edge in $E$ contributes $w(e) \cdot m(e) \cdot (k - m(e))$ to $\dsum{\set{C_1, \ldots, C_k}}$, we immediately have the following lemma.

\begin{lemma}\label{lem:cut-reduction}
    $H$ has a subgraph $H'$ of $k$ edges such that $\sum_{v \in V(H)} f_e(d_{H'}(e)) \ge t$ if and only if there are $k$ edge cuts $C_1, \ldots, C_k \subseteq E$ of $G$ with $|C_i| = 2$ for $1 \le i \le k$ such that $\dsum{\set{C_1, \ldots, C_k}} \ge t$.
\end{lemma}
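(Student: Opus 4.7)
The plan is to establish a single double-counting identity that rewrites the sum diversity as $\sum_{e \in E} f_e(m(e))$, where $m(e)$ is the number of cuts in $\set{C_1,\ldots,C_k}$ containing $e$, and then read off both directions of the equivalence from the obvious bijection between size-$2$ cuts of $G$ and edges of $H$.

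First, I would expand the definition of sum diversity and swap the order of summation:
\[
    \dsum{\set{C_1,\ldots,C_k}} = \sum_{1\le i<j\le k} w(C_i \triangle C_j) = \sum_{e \in E} w(e)\cdot \bigl|\set{(i,j) : 1\le i<j\le k,\ e \in C_i \triangle C_j}\bigr|.
\]
For each fixed $e$, a pair $(i,j)$ is counted precisely when exactly one of $C_i, C_j$ contains $e$, and there are exactly $m(e)(k-m(e))$ such unordered pairs. Hence
\[
    \dsum{\set{C_1,\ldots,C_k}} = \sum_{e \in E} w(e)\cdot m(e)(k-m(e)) = \sum_{e \in E} f_e(m(e)).
\]

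Next, by the construction of $H$, a pair $\set{e,f}\subseteq E$ is an edge of $H$ if and only if $\set{e,f}$ is a size-$2$ cut of $G$, giving a bijection between the edge set of $H$ and the family of size-$2$ cuts of $G$. Under this bijection, a $k$-edge subgraph $H'$ of $H$ corresponds to a collection of $k$ distinct size-$2$ cuts $C_1,\ldots,C_k$ of $G$, and for every vertex $e$ of $H$ (equivalently, every $e\in E$) the degree $d_{H'}(e)$ equals $m(e)$, since both count how many of the chosen cuts contain $e$. Note also that $d_{H'}(e)\le k$ in any $k$-edge subgraph, so the ``$\infty$'' branch in the definition of $f_e$ is never activated.

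Putting the two pieces together, any $k$-edge subgraph $H'$ and its associated family of $k$ distinct size-$2$ cuts satisfy
\[
    \sum_{v\in V(H)} f_v(d_{H'}(v)) = \sum_{e\in E} f_e(m(e)) = \dsum{\set{C_1,\ldots,C_k}},
\]
so one side is $\ge t$ iff the other is, which gives both directions of the ``if and only if.'' The proof is essentially a textbook double-count together with a bijection, so I do not anticipate any real obstacle; the only point that needs verification is that the convex function $f_e$ defined just before the lemma coincides with the per-edge contribution $w(e)\,m(e)(k-m(e))$ to the sum diversity, which the identity above makes explicit.
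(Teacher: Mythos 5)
Your proof is correct and follows the same route as the paper, which simply asserts the per-edge contribution identity $\dsum{\set{C_1,\ldots,C_k}}=\sum_{e\in E} w(e)\cdot m(e)\cdot(k-m(e))$ and reads the lemma off the correspondence between edges of $H$ and size-$2$ cuts of $G$. You merely spell out the double-counting step and the identification $d_{H'}(e)=m(e)$ in more detail than the paper does.
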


By~\Cref{lem:cut-reduction} and \Cref{thm:deg-conv}, \DMGC{} can be solved in $\size{V}^{O(1)}$ time, proving the following theorem.

\begin{theorem}\label{thm:cut:lambda-2}
    \DMGC{} can be solved in polynomial time, provided that $\lambda(G) \le 2$.
\end{theorem}

\subsection{Interval schedulings}\label{sec:is}
For a pair of integers $a$ and $b$ with $a \le b$, the set of all numbers between $a$ and $b$ is denoted by $[a, b]$.
We call $I = [a, b]$ an \emph{interval}. 
For a pair of intervals $I = [a, b]$ and $J = [c, d]$, we say that $I$ \emph{overlaps} $J$ if $I \cap J\neq \emptyset$.
For a set of intervals $\mathcal S = \set{I_1, \ldots, I_r}$, we say that $\mathcal S$ is a \emph{valid scheduling} (or simply a \emph{scheduling}) if for any pair of intervals $I_i, I_j \in \mathcal S$, $I_i$ does not overlap $I_j$.
In particular, we call $\mathcal S$ an \emph{$r$-scheduling} if $\size{\mathcal S} = r$ for $r \in \mathbb N$.
In this section, we deal with the following problem.

\newcommand{\DIS}{{\textsc{Diverse Interval Schedulings}}}

\begin{definition}[\DIS{}]
    Given a set of intervals $\mathcal I$, a weight function $w\colon \mathcal I \to \mathbb R_{> 0}$, and integers $k$ and $r$, the task of \DIS{} is to find $k$ distinct $r$-schedulings $\mathcal S_1, \ldots, \mathcal S_k \subseteq \mathcal I$ that maximize $\dsum{\set{\mathcal S_1, \ldots, \mathcal S_k}}$.
\end{definition}

Since the problem of partitioning a set of intervals $\mathcal I  = \set{I_1, \ldots, I_n}$ into $k$ scheduling $\mathcal S_1, \ldots, \mathcal S_k$ such that each $\mathcal S_i$ has exactly $r$ intervals is \NP-hard~\cite{BodlaenderJ:TCS:1995,Gardi:DAM:2009}
\footnote{Note that the \NP-hardness is proven for the case that each $\mathcal S_i$ has \emph{at most} $r$ intervals, but a simple reduction proves the \NP-hardness of this variant.}, by~\Cref{obs:packing}, the following theorem holds.

\begin{theorem}
    \DIS{} is \NP-hard.
\end{theorem}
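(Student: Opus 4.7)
The plan is to reduce from the problem stated in the footnote: given a set $\mathcal{I}$ of intervals and integers $k$ and $r$, decide whether $\mathcal{I}$ can be partitioned into $k$ schedulings $\mathcal{S}_1, \ldots, \mathcal{S}_k$ with $|\mathcal{S}_i| = r$ for every $i$. This is the exact-size variant of the interval partitioning problem, and its NP-hardness is the content of the references~\citep{BodlaenderJ:TCS:1995,Gardi:DAM:2009} together with the easy modification noted in the footnote; I will simply cite it as given.

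Given such a partitioning instance, I would build an instance of \DIS{} by taking the same collection $\mathcal{I}$, the same $k$ and $r$, and the uniform weight function $w(I) = 1$ for every $I \in \mathcal{I}$. Observe that in any feasible solution all the $\mathcal{S}_i$ are $r$-schedulings, so the family $\mathcal{X}$ of feasible solutions satisfies the hypothesis of \Cref{obs:packing}: every element of $\mathcal{X}$ has cardinality $r$, and the weights are unit. Applying \Cref{obs:packing} tells me that $\dsum{\{\mathcal{S}_1,\ldots,\mathcal{S}_k\}} \geq kr(k-1)$ if and only if the $\mathcal{S}_i$ are pairwise disjoint.

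The final step ties disjointness to partitioning. Without loss of generality assume the partitioning instance has $|\mathcal{I}| = kr$ (otherwise it is a trivial no-instance and we output a trivial no-instance of \DIS{} directly). Then $k$ pairwise disjoint subsets of size $r$ inside a universe of size $kr$ automatically cover $\mathcal{I}$, hence form a partition; and conversely a valid partition consists of $k$ pairwise disjoint $r$-schedulings. Combining this with the previous paragraph, the partitioning instance is a yes-instance if and only if the constructed \DIS{} instance admits a solution of value at least $kr(k-1)$. Since the reduction is clearly polynomial-time, \NP-hardness of \DIS{} follows.

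There is no real obstacle here; the only point that requires a moment of care is making sure \Cref{obs:packing} is applied to $\mathcal{X}$ equal to the set of all $r$-schedulings (so that the cardinality hypothesis holds automatically), and that the universe-size assumption $|\mathcal{I}| = kr$ lets pairwise disjointness upgrade to a partition. Everything else is a direct translation of the quantitative threshold from the observation into a decision threshold for the reduction.
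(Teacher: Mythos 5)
Your proposal is correct and follows essentially the same route as the paper, which also reduces from the exact-size interval partitioning problem of~\citep{BodlaenderJ:TCS:1995,Gardi:DAM:2009} and invokes \Cref{obs:packing} to equate pairwise disjointness with the diversity threshold $kr(k-1)$. The only difference is that you spell out the $|\mathcal{I}|=kr$ normalization and the disjointness-to-partition step explicitly, which the paper leaves implicit.
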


To apply \Cref{thm:main-framework} to \DIS{}, it suffices to give a polynomial-time algorithm for {\sc Weighted Extension} for interval schedulings.
Observe that if $\mathit{In}$ is not a scheduling, then there is no scheduling containing $\mathit{In}$.
Observe also that we can remove all intervals included in $\mathit{Ex}$ or overlapping some interval in $\mathit{In}$.
Thus, the problem can be reduced to the one for finding a maximum weight scheduling with cardinality $r' = r - \size{\mathit{In}}$.
This problem can be solved in polynomial time by using a simple dynamic programming approach. 

\begin{lemma}\label{lem:max-is-dp}
    Given a set $\mathcal I$ and $w'\colon \mathcal I \to \mathbb R$ and $r' \in \mathbb N$, there is a polynomial-time algorithm finding a maximum weight $r'$-scheduling in $\order{\size{\mathcal I}^2r'}$ time.
\end{lemma}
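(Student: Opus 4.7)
The plan is to use a straightforward dynamic programming over intervals sorted by right endpoint, generalized to track the number of chosen intervals. First I would sort the intervals of $\mathcal I$ by right endpoint in $O(|\mathcal I|\log |\mathcal I|)$ time and relabel them $I_1,\ldots,I_n$ (where $n = |\mathcal I|$) so that their right endpoints are non-decreasing in the index.

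Next I would define a table $T[i][j]$, for $1 \le i \le n$ and $1 \le j \le r'$, as the maximum weight of a $j$-scheduling $\mathcal S \subseteq \{I_1,\ldots,I_i\}$ with $I_i \in \mathcal S$, setting $T[i][j] = -\infty$ if no such scheduling exists. The base case is $T[i][1] = w'(I_i)$, and the recurrence is
\[
T[i][j] \;=\; w'(I_i) \;+\; \max\bigl\{\, T[k][j-1] : k < i,\ I_k \text{ does not overlap } I_i \,\bigr\},
\]
where the maximum over an empty set is $-\infty$. The optimum value of an $r'$-scheduling is $\max_{1 \le i \le n} T[i][r']$, and an optimal scheduling can be reconstructed by standard backtracking. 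Correctness follows by induction on $j$: any $j$-scheduling has a well-defined last interval under the right-endpoint ordering, so the recurrence exhausts all ways to extend an optimal $(j-1)$-scheduling by appending $I_i$.

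For the running time, each entry $T[i][j]$ is computed in $O(n)$ time by scanning the previous column, and the table has $O(n\,r')$ entries, giving $O(n^2 r')$ total time and matching the claimed bound $O(|\mathcal I|^2 r')$. There is no real obstacle here; the only thing worth double-checking is that after sorting by right endpoint, the non-overlap condition between $I_k$ and $I_i$ (with $k < i$) is equivalent to $I_k$ ending strictly before $I_i$ begins, so the recurrence correctly characterizes extensions of shorter schedulings.
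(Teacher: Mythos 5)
Your proposal is correct and follows essentially the same route as the paper, which sketches exactly this dynamic program: sort by right endpoints and let ${\rm opt}(p,q)$ be the maximum weight of a $q$-scheduling among $\{I_1,\ldots,I_p\}$ containing $I_p$, computed in $O(|\mathcal I|^2 r')$ time. You have merely written out the recurrence and correctness argument that the paper leaves implicit.
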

\begin{proof}
    The algorithm is analogous to that to find a maximum weight independent set on interval graphs, which is roughly sketched as follows.
    We assume that $\mathcal I = \{I_1, I_2, \ldots, I_n\}$ is sorted with respect to their right end points.
    We define ${\rm opt}(p, q)$ as the maximum total weight of a $q$-scheduling $\mathcal S$ in $\{I_1, I_2, \ldots, I_p\}$ such that $I_p \in \mathcal S$ for $0 \le p \le n$ and $0 \le q \le r'$.
    Then, the values of ${\rm opt}(p, q)$ for all $p$ and $q$ can be computed by a standard dynamic programming algorithm in time $O(\size{\mathcal I}^2r')$.
\end{proof}

By \Cref{thm:main-framework} and \Cref{lem:topk}, we obtain a polynomial-time approximation algorithm for \DIS{} with factor $\max(1 - 2/k, 1/2)$.

Finally, we show that \DIS{} can be solved in polynomial time for fixed $k$ using a dynamic programming approach, which implies a PTAS for \DIS{}.

Similarly to the proof of \Cref{lem:max-is-dp}, assume that $\mathcal I = \{I_1, I_2, \ldots, I_n\}$ is sorted with respect to their right end points.
Let $[k] = \{1, 2, \ldots, k\}$.
For each $0 \le p \le \size{\mathcal I}$, we consider a tuple $T = (p, L, R, \Gamma)$, where $L$ and $R$ are vectors in $([n] \cup \{0\})^k$ and $([r] \cup \{0\})^k$, respectively, and $\Gamma$ is a subset of $\binom{[k]}{2}$.
Clearly, the number of tuples is $O(n(n+1)^k(r+1)^k2^{\binom{k}{2}})$, which is polynomial when $k$ is a constant.
We denote by $\ell_i$ and $r_i$ the $i$th component of $L$ and $R$, respectively.
For a tuple $T = (p, L, R, \Gamma)$,
the value ${\rm opt}(T)$ is the maximum value of $\dsum{\set{\mathcal S_1, \ldots, \mathcal S_k}}$ 
for $k$ schedulings under the following four conditions: (1) the maximum index of an interval in $\bigcup_{1\le i\le k} \mathcal S_i$ is $p$ ($p = 0$ if $\bigcup_{1\le i\le k} \mathcal S_i = \emptyset$); (2) for $1 \le i \le k$, the maximum index of an interval in $\mathcal S_i$ is $\ell_i$ ($\ell_i = 0$ if $\mathcal S_i = \emptyset$); (3) for $1 \le i \le k$, $\size{\mathcal S_i} = r_i$; and (4) for $1 \le i < j \le k$, $\set{i, j} \in \Gamma$ if and only if $\mathcal S_i$ and $\mathcal S_j$ are distinct.

We define ${\rm opt}(T) = -\infty$ if no such a set of schedulings exists.
When $R = 
(r, r, \ldots, r)$ and $\Gamma = \binom{[k]}{2}$, 
there is a set of $k$ distinct $r$-schedulings that have the sum diversity ${\rm opt}(T)$ unless ${\rm opt}(T) = -\infty$.
For a tuple $T$, we say that a set of $k$ schedulings is \emph{valid for $T$} if it satisfies the above four conditions.
Hence, among the tuples of the form $(p, L, R, \Gamma)$ with $R = (r, \ldots, r)$ and $\Gamma = \binom{[k]}{2}$, ${\rm opt}(T)$ is the optimal value for \DIS{}.
We next explain the outline of our dynamic programming algorithm to compute ${\rm opt}(T)$ for any $T$.

As a base case, $p = 0$, $L = (0, \ldots, 0)$, $R = (0, \ldots, 0)$, and $\Gamma = \emptyset$ if and only if ${\rm opt}(T) = 0$.
Let $T'$ be a tuple $(p', L', R', \Gamma')$ that satisfies the following conditions:
(1)$p' < p$; (2) for any $1 \le i \le k$, $\ell'_i \le \ell_i$ and $r'_i \le r_i$; and (3) $\Gamma' \subseteq \Gamma$.
We say that a tuple $T'$ satisfying the above conditions is \emph{dominated by $T$}.
We denote the set of tuples dominated by $T$ by $D(T)$.
Let $C(T) = \inset{i}{\ell_i = p}$.
A tuple $T'$ is \emph{valid for $T$} if $T'$ satisfies the following conditions:
(1) $T' \in D(T)$;
(2) if $i \in C(T)$ and $\ell_i > 0$, then interval $I_{\ell_i}$ does not overlap with $I_p$;
(3) if $i \in C(T)$, $r'_i = r_i - 1$, otherwise, $r'_i = r_i$; and
(4) $\Gamma = \Gamma' \cup P(T)$ with $P(T) \coloneqq \inset{\set{i, j} \in \binom{[k]}{2}}{\size{\set{i, j} \cap C(T)} = 1}$.
We denote the set of valid tuples for $T$ as $V(T)$.
We compute ${\rm opt}(T)$ using the following lemma.

\begin{lemma}
    For a tuple $T$,
    \begin{align*}
        {\rm opt}(T) = \underset{T' \in V(T)}{\max}({\rm opt}(T') + w(I_p)\cdot \size{C(T)}\cdot(k - \size{C(T)})).
    \end{align*}
\end{lemma}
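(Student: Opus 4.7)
The plan is to establish the recurrence by proving both directions of the equality via a correspondence between schedulings valid for $T$ and those valid for tuples $T' \in V(T)$, where the correspondence is given by inserting or deleting the interval $I_p$ from the schedulings indexed by $C(T)$.

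For the direction ${\rm opt}(T) \geq \max_{T' \in V(T)}(\ldots)$, I would fix $T' = (p', L', R', \Gamma') \in V(T)$ and take $k$ schedulings $\mathcal S'_1, \ldots, \mathcal S'_k$ valid for $T'$ achieving ${\rm opt}(T')$. Define $\mathcal S_i = \mathcal S'_i \cup \set{I_p}$ if $i \in C(T)$ and $\mathcal S_i = \mathcal S'_i$ otherwise. Each $\mathcal S_i$ is a valid scheduling: for $i \in C(T)$, condition (2) in the definition of $V(T)$ ensures the previous maximum interval $I_{\ell'_i}$ does not overlap $I_p$, and all other intervals in $\mathcal S'_i$ have even smaller right endpoints. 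The four conditions defining the value ${\rm opt}(T)$ follow: the overall maximum index is $p$ because $p' < p$ and $I_p$ is added; the per-scheduling maxima become $\ell_i$ by the update (noting $\ell_i = p$ for $i \in C(T)$ and $\ell_i = \ell'_i$ otherwise); cardinalities match by condition (3); and the distinctness set updates from $\Gamma'$ to $\Gamma' \cup P(T) = \Gamma$ because pairs with exactly one endpoint in $C(T)$ become distinct on account of $I_p$, while all other pairs have their symmetric difference unchanged. Finally, the contribution of $I_p$ to $\dsum{\set{\mathcal S_1, \ldots, \mathcal S_k}}$ is exactly $w(I_p)$ per pair $\set{i, j}$ with $|\set{i, j} \cap C(T)| = 1$, giving the claimed additive term $w(I_p) \cdot \size{C(T)} \cdot (k - \size{C(T)})$.

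For the reverse direction, I would take schedulings $\mathcal S_1, \ldots, \mathcal S_k$ valid for $T$ with $\dsum{\set{\mathcal S_1, \ldots, \mathcal S_k}} = {\rm opt}(T)$, set $\mathcal S'_i = \mathcal S_i \setminus \set{I_p}$ for $i \in C(T)$ and $\mathcal S'_i = \mathcal S_i$ otherwise, and then read off the tuple $T' = (p', L', R', \Gamma')$ that this family realizes: $p' = \max\{j : I_j \in \bigcup_i \mathcal S'_i\}$, $\ell'_i$ is the largest index in $\mathcal S'_i$, $r'_i = \size{\mathcal S'_i}$, and $\Gamma' = \inset{\set{i, j}}{\mathcal S'_i \neq \mathcal S'_j}$. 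I would verify $T' \in V(T)$: dominance is immediate (in particular $p' < p$ since $I_p$ is removed from every scheduling that contained it); condition (2) holds because the original $\mathcal S_i$ was a valid scheduling containing $I_p$, so its next-largest interval $I_{\ell'_i}$ does not overlap $I_p$; condition (3) is by construction; and for condition (4), observe that pairs $\set{i, j} \notin P(T)$ satisfy $\mathcal S_i \mathbin{\triangle} \mathcal S_j = \mathcal S'_i \mathbin{\triangle} \mathcal S'_j$ (both indices in or both out of $C(T)$), hence their distinctness status is unchanged, while pairs in $P(T)$ are automatically in $\Gamma$ because exactly one of the two sets contains $I_p$. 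The same pair-by-pair accounting as above shows that the sum diversity drops by exactly $w(I_p) \cdot \size{C(T)} \cdot (k - \size{C(T)})$, yielding ${\rm opt}(T') \geq {\rm opt}(T) - w(I_p) \cdot \size{C(T)} \cdot (k - \size{C(T)})$.

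The main obstacle is the careful bookkeeping around $\Gamma$ and condition (2). It is tempting to identify $I_{\ell_i}$ with $I_{\ell'_i}$ in condition (2), but $\ell_i = p$ when $i \in C(T)$, so the condition must be parsed as a constraint on the predecessor $T'$, i.e.\ on $\ell'_i$; this matches the intuition that $I_p$ can be appended to scheduling $i$ only if the previous rightmost interval does not overlap it. The $\Gamma$ condition is delicate because $\Gamma'$ is not uniquely determined by $T$ and the underlying schedulings: two indices $i, j \in C(T)$ may give $\mathcal S_i \neq \mathcal S_j$ yet $\mathcal S'_i = \mathcal S'_j$ (when they differ only in $I_p$), so the ``$\leq$'' direction requires choosing $\Gamma'$ faithfully from the constructed $\mathcal S'_i$'s rather than simply setting $\Gamma' = \Gamma$. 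Once this is handled, both inequalities line up and the recurrence follows.
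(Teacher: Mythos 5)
Your proof is correct and follows essentially the same route as the paper's: the correspondence between schedulings valid for $T$ and those valid for some $T' \in V(T)$ obtained by inserting/deleting $I_p$ in the schedulings indexed by $C(T)$, with $I_p$ contributing $w(I_p)$ to exactly the $|C(T)|\cdot(k-|C(T)|)$ pairs split by $C(T)$; your treatment is in fact more explicit than the paper's about condition (2) and the $\Gamma$ bookkeeping. (One small remark: the scenario you flag as delicate, namely $i, j \in C(T)$ with $\mathcal S_i \neq \mathcal S_j$ but $\mathcal S'_i = \mathcal S'_j$, cannot actually occur since both schedulings then contain $I_p$; your resolution of reading $\Gamma'$ off the constructed sets is nonetheless sound.)
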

\begin{proof}
    Let $T = (p, L, R, \Gamma)$.
    Let $\mathcal S = \set{\mathcal S_1, \ldots, \mathcal S_k}$ be a valid set of schedulings with $\dsum{\set{\mathcal S_1, \ldots, \mathcal S_j}} = {\rm opt}(T)$.
    Then $\mathcal S' = (\mathcal S_1 \setminus \set{I_p}, \ldots, \mathcal S_k \setminus \set{I_p})$ is a valid set of schedulings for $T' \in V(T)$.
    Moreover, $\dsum{\mathcal S} = \dsum{\mathcal S'} + w(I_p) \cdot |C(T)| \cdot (k - |C(T)|)$ as $I_p$ contributes $ w(I_p) \cdot |C(T)| \cdot (k - |C(T)|)$ to the diversity.
    Thus, the left-hand side is at most the right-hand side.
    
    Conversely, let $T'$ be a tuple maximizing the left-hand side and let $\mathcal S' = \set{\mathcal S'_1, \ldots, \mathcal S'_k}$ be a valid set of schedulings for $T'$.
    For each $1 \le i \le k$, we set $\mathcal S_i = \mathcal S'_i \cup \{I_p\}$ if $i \in C(T)$ and $\mathcal S_i = \mathcal S'_i$ otherwise.
    By condition (2) in the definition of a valid tuple, each interval in $\mathcal S'_i$ does not overlap with $I_p$, which means that $\mathcal S_i$ is a scheduling.
    Thus, the right-hand side is at most the left-hand side.
    %
%
\end{proof}

Thus, we can compute ${\rm opt}(T)$ for any $T$ in polynomial time when $k$ is a constant. 
Moreover, from ${\rm opt}(T)$, we can find $k$ schedulings with the maximum sum diversity by a standard trace-back technique.
Combining the approximation algorithm and the above algorithm, we obtain a PTAS.

\begin{theorem}
    \DIS{} admits a PTAS. 
\end{theorem}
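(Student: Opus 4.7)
The plan is to combine the two algorithmic pieces developed in this subsection in exactly the same way that \Cref{cor:MSD:ptas} is derived from \Cref{thm:cevallos}. The first piece is the constant-factor approximation for \DIS{} with ratio $\max(1-2/k,1/2)$, which is obtained by plugging the \textsc{Weighted Extension} solver of \Cref{lem:max-is-dp} into the framework of \Cref{thm:main-framework} via \Cref{lem:topk}. The second piece is the tuple-based dynamic program computing ${\rm opt}(T)$ over tuples $T=(p,L,R,\Gamma)$, whose table size $O(n(n+1)^k(r+1)^k 2^{\binom{k}{2}})$ is polynomial whenever $k$ is bounded by a constant; by the preceding lemma this DP solves \DIS{} exactly for constant $k$.

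Given an instance and a constant $\varepsilon>0$, I would branch on the magnitude of $k$. If $k<2/\varepsilon$, then $k$ is a constant depending only on $\varepsilon$, so the tuple DP returns an optimal set of $k$ distinct $r$-schedulings in time $n^{O(1/\varepsilon)}$ using the standard trace-back mentioned just before the statement. Otherwise $k\ge 2/\varepsilon$, in which case $1-2/k\ge 1-\varepsilon$, so the local-search approximation already outputs a solution of value at least $(1-\varepsilon)\,{\rm OPT}$ in polynomial time. In either case the output has value at least $(1-\varepsilon)\,{\rm OPT}$, which is the definition of a PTAS.

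Since both ingredients are in place, the proof is essentially a direct instantiation of the dichotomy scheme, so there is no substantial obstacle here; the genuinely nontrivial work was (i) verifying that $d_w$ is $\ell_1$ and hence of negative type (so \Cref{thm:main-framework} applies), (ii) designing the \textsc{Weighted Extension} DP of \Cref{lem:max-is-dp}, and (iii) designing the polynomial-time exact tuple DP for fixed $k$. The only mild point to be careful about in writing the final step is that \Cref{thm:main-framework} gives factor $\max(1-2/k,1/2)$, not just $1-2/k$; but since we only invoke the approximation branch when $k\ge 2/\varepsilon\ge 2$, the bound $1-2/k$ is the relevant one and the argument goes through verbatim, completing the proof that \DIS{} admits a PTAS.
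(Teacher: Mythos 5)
Your proposal is correct and follows essentially the same route as the paper: it combines the $\max(1-2/k,1/2)$-approximation obtained from the framework with the exact tuple-based dynamic program for constant $k$, and applies the standard dichotomy on $k$ versus $2/\varepsilon$ exactly as in \Cref{cor:MSD:ptas} and the final clause of \Cref{thm:main-framework}. No gaps here.
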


It is not hard to see that the above algorithm is modified into the one finding a set $\mathcal S$ of $k$ valid schedulings $\mathcal S_i$ with $|\mathcal S_i| \ge r$ maximizing $\dsum{\set{\mathcal S_1, \ldots, \mathcal S_k}}$.
The modified algorithm simply takes ``at least $r$ intervals'' instead of ``exactly $r$ intervals'', which runs in polynomial time as well.

\newcommand{\opt}[1]{{\rm opt}_{#1}}

\section{An improved framework when allowing duplication}\label{sec:random}
We gave a framework for finding diverse solutions without duplications with an approximation factor $\max(1 -2/k, 1/2)$ if we have a top-$k$ enumeration algorithm.
We improve an approximation factor from $\max(1 -2/k, 1/2)$ to $(1 - \varepsilon)(1 - 1/k)$ for any constant $\varepsilon > 0$ if we allow duplications.
By setting $\varepsilon \le 1/(k-1)$, 
this approximation ratio is better than $1 - 2/k$ since
$(1 - \varepsilon)(1 - 1/k) \le 1 - 2/k$ holds.

In this framework, we assume that we have an algorithm for finding a maximum-weight solution in $\mathcal X$.
It is worth noting that the assumption in this framework is weaker than the assumption required in the framework in \Cref{sec:framework}.
Hence, this framework can be similarly applied to the concrete examples in \Cref{sec:appl}, which were treated as applications of the framework in \Cref{sec:framework}.

We first give a ``randomized'' framework for finding diverse solutions with duplications with an approximation factor $(1 - \varepsilon)(1 - 1/k)$ in expectation.
In a computational model considered in our randomized framework,
we identify a discrete probability distribution $\mathcal D$ over $\mathcal X$ with the set of pairs consisting of elements in 
$\mathcal X$ that have non-zero probability and their corresponding probabilities.
Given a discrete probability distribution $\mathcal D$ over $\mathcal X$,
one can perform sample from $\mathcal D$ in a constant time.
We denote  $X \sim \mathcal D$ that $X$ is drawn from the probability distribution $\mathcal D$ over $\mathcal X$.
We denote by $(X, X') \sim D \times D'$ the process of independently sampling two elements from $\mathcal X$, where $X \sim D$ and $X' \sim D'$.

We first define $\opt{k}$ and $\opt{\infty}$ as follows:
\begin{align*}
    \opt{k} &\coloneq \max_{(X_1,\dots, X_k)\in \mathcal{X}^k} \left(\frac{2}{k(k-1)}\sum_{1\leq i<j\leq k} d(X_i,X_j) \right) \\
    \opt{\infty} &\coloneq \max_{\mathcal{D}\in \Pi(\mathcal{X})} \left(\E_{(X,X')\sim \mathcal{D}\times \mathcal{D}}\left[d(X,X')\right]\right), 
\end{align*}
where $\mathcal{X}$ is the family of all feasible solutions, $\Pi(\mathcal{X})$ is the set of all probability distributions over $\mathcal{X}$, and $d$ is a weighted Hamming distance over $\mathcal{X}\times \mathcal X$.
Note that while we previously defined $\rm opt$ as the sum of pairwise distances, 
in this section it is defined as the average of the pairwise distances.
This modification only multiplies the objective value by 
$2/k(k-1)$, and therefore does not affect an approximation ratio analysis.

The following lemma is essential to analyze an approximation ratio of our framework.

\begin{lemma}\label{lem:kandinf}
It holds that
\begin{align*}
    \frac{k-1}{k}\opt{k}\leq \opt{\infty} \leq \opt{k}.
\end{align*}
\end{lemma}
\begin{proof}
Let $(Y_1,\dots, Y_k)$ be the family of feasible solutions that attains $\opt{k}$. 
Let $\bm x$ be the vector in $\mathbb R^{\size{\mathcal X}}$ such that $x_{\alpha_i} = 1/k$ for each $1 \le i \le k$.
Otherwise, $x_j = 0$.
Since $\bm x$ is contained in $\Pi(\mathcal X)$,
the following inequality holds from the definition of $\opt{\infty}$.
\begin{align*}
    \opt{\infty}
    &\geq \sum_{1 \le i, j \le \size{\mathcal X}}x_ix_jd(X_i,X_j)
    = \frac{1}{k^2}\sum_{i=1}^{k}\sum_{j=1}^{k}d(Y_i,Y_j)\\
    &= \frac{2}{k^2}\sum_{1\leq i<j\leq k}d(Y_i, Y_{j})
    = \frac{k-1}{k}\opt{k}.
\end{align*}
We next give an upper bound of $\opt{\infty}$. 
Let $\mathcal D$ be a probability distribution that attains $\opt{\infty}$.
The following inequality and the lemma hold.
\begin{align*}
    \opt{\infty}
    &= \E_{(X_{1}, X_{2}) \sim \mathcal D \times \mathcal D}[d(X_{1}, X_{2})]
    = \E_{(X_{1}, \ldots, X_{k})\sim \mathcal D^k}\left[\frac{2}{k(k-1)}\sum_{1\leq i<j\leq k}d(X_{i},X_{j})\right]\\
    &\leq \max_{(X_{1},\dots, X_{k})\in \mathcal{X}^k}\left(\frac{2}{k(k-1)}\sum_{1\leq i<j\leq k}d(X_{i},X_{j})\right)
    = \opt{k}.
\end{align*}
Note that since $X_{i}$ and $X_{j}$ are sampled from $\mathcal D$ independently, $\E[d(X_{i}, X_{j})] = \E[d(X_{1}, X_{2})]$ for any $i$ and $j$, and the second equality holds.
\end{proof}

The above lemma shows that if we obtain a probability distribution $\mathcal D$ that attains $(1 - \varepsilon) \opt{\infty}$ in expectation, 
by performing $k$ random samples from it, a $(1 - \varepsilon)(1-1/k)$-approximate solution can be found.
In what follows, we show that we can obtain such a probability distribution by a local search approach.

We give an overview of our approach.
Let $\mathcal X = \set{X^{(1)}, \ldots, X^{(\size{\mathcal X})}}$ be the set of feasible solutions and
$P$ be the convex hull $\inset{\bm{1}^{(i)}}{X^{(i)} \in \mathcal{X}}$, where 
the vector $\bm 1^{(i)} \in \mathbb R^{\size{\mathcal X}}$ has $1$ $i$-th position and $0$ elsewhere.
Each point in the polytope $P$ can be identified with a probability distribution over $\mathcal X$.
For two points $x, x' \in P$, we define a function $\mu(x, x') \coloneqq \E_{(X, X') \sim x \times x'}[d(X, X')]$.
Notice that $\max_{x \in P}\mu(x, x) = \opt{\infty}$.
In what follows, we denote $\mu(x, x)$ as $\mu(x)$.
In our approach, we pick an arbitrary point $x^{(1)} \in \set{\bm{1}^{(1)}, \ldots, \bm{1}^{(\size{\mathcal X})}}$.
For each step, we find a vector $\bm 1^{(i)}$ that maximizes $\mu(x^{(1)}, \bm 1^{(i)})$, and 
update $x^{(1)}$ to $x^{(2)} = \mu(\lambda x^{(1)}, (1 - \lambda)\bm 1^{(i)})$, where $\lambda$ is a real number between $0$ and $1$
that maximizes $\mu(\lambda x^{(1)}, (1 - \lambda)\bm 1^{(i)})$.
By repeating this procedure sufficiently many times, we obtain a point in $P$ and
a desired solution by sampling $k$ solutions from this probability distribution.
See for the details in \Cref{alg:convex}.

\newcommand{\ceil}[1]{\lceil #1 \rceil}
\begin{algorithm}[t!]
\caption{A polynomial-time $(1 - \varepsilon)(1 - \frac{1}{k})$-approximation algorithm for \MSDC{} when duplication allowed. 
}\label{alg:convex}
\Procedure{\LocalDup{$\mathcal{X},\varepsilon$}}{
    Pick any $X^{(i)} \in \mathcal{X}$ and let $x\leftarrow \bm{1}^{(i)}$\;
    \For{$q=1,\dots, \ceil{16\varepsilon^{-1}}-1$}{
        Let $X^{(i)} \in \mathcal{X}$ be the argmax of $\mu(x,\bm{1}^{(i)})$\;\label{line:argmaxX}
        Let $\lambda$ be the argmax of $\mu(\lambda \bm{1}^{(i)}+(1-\lambda)x)$ between $0$ and $1$\;
        $x\leftarrow \lambda \bm{1}^{(i)}+(1-\lambda)x$\;
    }
    \Return $x$ as a convex combination of $\bm{1}^{(i)}$s for $X^{(i)} \in \mathcal{X}$\;
}
\end{algorithm}

To execute \Cref{alg:convex}, we have to determine the number of reputations of the above procedure,
and the following operations need to be performed efficiently.
\begin{enumerate}
    \item For a point $x\in P$, find a vector $\bm 1^{(i)}$ that maximizes $\mu(x, \bm 1^{(i)})$.
    \item For a point $x$ and a vector $\bm 1^{(i)}$, find $0 \le \lambda \le 1$ that maximizes $\mu(\lambda x, (1 - \lambda)\bm 1^{(i)})$.
\end{enumerate}

We show that if $d$ is a weighted Hamming distance, the operations~1 and 2 can be performed in ${\rm poly}(\size{E})$ time.
We first show the operation~1 can be done in ${\rm poly}(\size{E})$ time if we have a polynomial-time algorithm for finding a maximum-weight feasible solution in $\mathcal X$ for an arbitrary weight function $w'\colon E\to \mathbb R$.
To this end,
we transform $\mu(x, x')$ as follows.
\begin{align*}
    \mu(x, x') &\coloneq \E_{(X,X')\sim x \times x'}\left[d(X,X')\right]\\
    &= \E_{(X,X')\sim x\times x'}\left[w(X)+w(X')-2w(X\cap X')\right]\\
    &= \E_{X\sim x}\left[\sum_{e\in X}w(e)\right] + \E_{X'\sim x'}\left[\sum_{e\in X'}w(e)\right] - 2\E_{(X,X')\sim x\times x'}\left[\sum_{e\in X\cap X'}w(e)\right].
\end{align*}

Let $X$ be a solution in $\mathcal X$ drawn according to the distribution $x$, 
and let $p(x, e)$ denote the probability that element $e$ is included in $X$.
Then, $\E_{X \sim x} \left[ \sum_{e \in X} w(e) \right] = \sum_{e \in E} p(x, e) w(e)$ holds.
Therefore, we obtain the formula 
\begin{align*}
    \mu(x, x')
    &= \E_{X\sim x}\left[\sum_{e\in X}w(e)\right] + \E_{X'\sim x'}\left[\sum_{e\in X'}w(e)\right] - 2\E_{(X,X')\sim x\times x'}\left[\sum_{e\in X\cap X'}w(e)\right]\\
    &= \sum_{e\in E}w(e) (p(x, e) + p(x', e) - 2p(x, e)p(x', e))\\
    &= \sum_{e\in E}w(e) (p(x, e) + p(x', e) (1 - 2p(x, e))).
\end{align*}

When $x$ is fixed, to maximize $\mu(x, x')$, we want to maximize the factor $\sum_{e \in E}p(x', e)(1-2p(x, e))$.
If $x' = \bm{1}^{(i)}$ for some $i$, 
$\sum_{e \in E}p(x', e)(1-2p(x, e)) = \sum_{e \in X^{(i)}}(1 - 2p(x, e))$.
Therefore, if we have an algorithm that maximizes the sum of weight,
the argmax of line~\ref{line:argmaxX} can be found using this algorithm by setting the weight of each $e \in E$ to $w(e)(1 - 2p(x, e))$.
Notice that $w(e)(1 - 2p(x, e))$ is a constant since $x$ is fixed.

We next consider the operation~2, that is, to find $\lambda$ that maximizes $\mu(\lambda \bm 1^{(i)}, (1-\lambda)x)$.
To this end, we show the following auxiliary lemmas.

\begin{lemma}\label{lem:amgm}
For $x,x'\in P$, $\mu(x)+\mu(x') \leq 2\mu(x,x')$.
\end{lemma}
\begin{proof}
From the definition of $\mu(x, x')$, we obtain the following:
\begin{align*}
    \mu(x)+\mu(x')
    &= 2\sum_{e\in E}w(e)(p(x, e) - (p(x, e))^2 + p(x', e) - (p(x',e))^2)\\
    &\leq 2\sum_{e\in E}w(e)(p(x, e) + p(x', e) - 2p(x, e)p(x', e))
    = 2\mu(x,x').
\end{align*}
Notice that we use the inequality $a^2 + b^2 \ge 2ab$, where $a$ and $b$ are real numbers.
\end{proof}

\begin{lemma}\label{lem:dif_bound}
For $x \in P$, $\max_{X^{(i)} \in \mathcal{X}}\left(\mu(x,\bm{1}^{(i)})-\mu(x)\right) \geq (\opt{\infty}-\mu(x))/2$.

\end{lemma}
\begin{proof}
Let $x^*$ be the vector that attains $\opt{\infty}$. 
Then, $x^*$ can be written in the form of $x^*=\sum_{i=1}^{\size{\mathcal X}}\alpha^*_i \bm{1}^{(i)}$ where $\sum_{i=1}^{\size{\mathcal X}}\alpha^*_i = 1$ and
$0\leq \alpha^*_i\leq 1$.
Thus, we have
\begin{align*}
    \max_{i\in \set{1, \ldots, \size{\mathcal X}}}\left(\mu(x,\bm{1}^{(i)})-\mu(x)\right)
    &\geq \sum_{j=1}^{\size{\mathcal X}}\alpha^*_j \left(\mu(x,\bm{1}^{(j)})-\mu(x)\right)\\
    &= \mu(x,x^*) - \mu(x)\\
    &\geq \frac{\mu(x)+\mu(x^*)}{2} - \mu(x)\\
    &= \frac{\mu(x^*)-\mu(x)}{2}
    = \frac{\opt{\infty}-\mu(x)}{2}.
\end{align*}
The first inequality holds since
$\max_{i \in \set{1, \ldots, \size{\mathcal X}}} \mu(x, \bm{1}^{(i)}) \ge \sum_j \alpha^*_j \mu(x, \bm{1}^{(j)})$ holds.
The second inequality holds from \Cref{lem:amgm}.
The first equality holds from the bilinearity of $\mu$.
\end{proof}

Based on the two auxiliary lemmas above, we are now ready to prove the following lemma.

\begin{lemma}\label{lem:max:lambda}
    For fixed $x \in P$ and $\bm 1^{(i)}$, 
    the real number $\lambda$ that maximizes $\mu(\lambda \bm 1^{(i)}, (1-\lambda)x)$ is 
    \[
        \frac{\mu(\bm{1}^{(i)},x)-\mu(x)}{2\mu(\bm{1}^{(i)},x)-\mu(x)}.
    \]
\end{lemma}
\begin{proof}
    We consider a value of $\lambda$ that maximizes $f(\lambda) \coloneq \mu(\lambda \bm{1}^{(i)}+(1-\lambda) x)$.
    We obtain the following formula from the bilinearity of $\mu$:
    \begin{align*}
        \mu(\lambda \bm{1}^{(i)}+(1-\lambda) x)
        &= \lambda \mu(\lambda \bm{1}^{(i)}+(1-\lambda) x, \bm 1^{(i)}) + (1- \lambda) \mu(\lambda \bm{1}^{(i)}+(1-\lambda) x, x).
    \end{align*}
    
    By applying the same modification and $\mu(\bm{1}^{(i)}) = 0$ for any $i$, we obtain the followings.
    \begin{align*}
        \mu(\bm 1^{(i)}, \lambda \bm{1}^{(i)}+(1-\lambda) x)
        &= (1-\lambda)\mu(\bm 1^{(i)}, x)\\
        \mu(x, \lambda \bm{1}^{(i)}+(1-\lambda) x)
        &= \lambda \mu(x, \bm1^{(i)}) + (1 - \lambda)\mu(x)
    \end{align*}
    
    Thus, we obtain the followings:
    \begin{align*}
        f(\lambda)
        &= 2\lambda(1-\lambda) \mu(\bm{1}^{(i)},x) + (1-\lambda)^2 \mu(x)\\
        &= -\lambda^2 (2\mu(\bm{1}^{(i)},x)-\mu(x))
        + 2\lambda (\mu(\bm{1}^{(i)},x)-\mu(x))
        + \mu(x).
    \end{align*}

    We show that $f$ is a concave function.
    From Lemma~\ref{lem:dif_bound}, we have $\mu(\bm{1}^{(i)},x)-\mu(x)\geq 0$.
    From Lemma~\ref{lem:amgm}, we have $2\mu(\bm{1}^{(i)},x)-\mu(x) \geq \mu(\bm{1}^{(i)})+\mu(x)-\mu(x) = 0$.
    If $2\mu(\bm{1}^{(i)},x)-\mu(x) = 0$, that is, $\mu(x) = 2\mu(\bm{1}^{(i)},x)$,
    $\mu(\bm{1}^{(i)}, x) - \mu(x) = - \mu(\bm 1^{(i)}, x)$.
    Since $w(e)$ is non-negative for each $e \in E$, $\mu(\bm 1^{(i)}, x)$ is non-negative.
    Since $- \mu(\bm 1^{(i)}, x) \ge 0$, $\mu(\bm 1^{(i)}, x) = 0$.
    Therefore, $\mu(\bm 1^{(i)}, x) = \mu(x) = 0$.
    In this case, $f(\lambda) = 0$ for any $\lambda$, and we assume that $2\mu(\bm{1}^{(i)},x)-\mu(x) \neq 0$.
    This assumption is made without loss of generality.
    Thus, the quadratic coefficient of $f$ is negative.
    Therefore, $f$ attains its maximum value when $\lambda = (\mu(\bm{1}^{(i)},x)-\mu(x))/(2\mu(\bm{1}^{(i)},x)-\mu(x))$ since $f$ is a concave function.
\end{proof}

We next analyze the number of repetition and an approximation ratio of \Cref{alg:convex}.
To analyze these, we consider how much improvement is achieved in a single repetition.

\begin{lemma}
Let $x$ and $x'$ be points in $P$ obtained by the $q$-th loop and the $(q + 1)$-th loop of \Cref{alg:convex}.
Then, the following holds:
\begin{align*}
    \mu\left(x'\right)-\mu\left(x\right)
    \geq \frac{\left(\opt{\infty}-\mu\left(x\right)\right)^2}{16\opt{\infty}}.
\end{align*}
\end{lemma}
\begin{proof}    
    Let $X^{(i)}$ be a solution in $\mathcal X$ selected by the $q$-th loop.
    Using the fact that $\lambda = (\mu(\bm{1}^{(i)},x)-\mu(x))/(2\mu(\bm{1}^{(i)},x)-\mu(x))$ and 
    the bilinearity of $\mu$ as in \Cref{lem:max:lambda}, we obtain the following equation.    
    \begin{align*}
        \mu(x') - \mu(x)
        &= -\frac{(\mu(\bm{1}^{(i)},x)-\mu(x))^2}{2\mu(\bm{1}^{(i)},x)-\mu(x)} + \frac{2(\mu(\bm{1}^{(i)},x)-\mu(x))^2}{2\mu(\bm{1}^{(i)},x)-\mu(x)}\\
        &= \frac{(\mu(\bm{1}^{(i)},x)-\mu(x))^2}{2\mu(\bm{1}^{(i)},x)-\mu(x)}\\
        &\ge \frac{(\opt{\infty} - \mu(x))^2}{4(2\mu(\bm{1}^{(i)},x)-\mu(x))}.
    \end{align*}
    
    The last inequality holds from \Cref{lem:dif_bound}.
    Moreover, 
    the following inequality shows that $2\mu(\bm{1}^{(i)},x) - \mu(x)$ is bounded by $4\opt{\infty}$.
    \begin{align*}
        2\mu(\bm{1}^{(i)},x)-\mu(x)
        \leq 2\mu(\bm{1}^{(i)},x)
        \leq 2\max_{X^{(j)}\in \mathcal{X}}\mu(\bm{1}^{(i)},\bm{1}^{(j)})
        \leq 2\opt{2}
        \leq 4\opt{\infty},
    \end{align*}
    where the second inequality is from the facts that $\mu$ is bilinear and $x$ can be expressed as a convex combination of elements of $\mathcal{X}$ and the last inequality is from Lemma~\ref{lem:kandinf}.
    Therefore,
    the statement holds.
\end{proof}

\begin{lemma}
Let $x$ be a point in $P$ obtained by the $q$-th loop of \Cref{alg:convex}.
Then, 
\begin{align*}
    \mu\left(x\right) \geq \left(1-\frac{16}{q}\right)\opt{\infty}.
\end{align*}
\end{lemma}
\begin{proof}
We prove this by induction. The lemma is trivial for $q=1$. 
Let $x'$ be a point obtained by $q + 1$-th loop.
We have
\begin{align*}
    \opt{\infty} - \mu\left(x'\right)
    &\leq \opt{\infty}-\left(\mu\left(x\right)+\frac{\left(\opt{\infty}-\mu\left(x\right)\right)^2}{16\opt{\infty}}\right)\\
    &= \left(\opt{\infty}-\mu\left(x\right)\right)\left(1-\frac{\opt{\infty}-\mu\left(x\right)}{16\opt{\infty}}\right)\\
    &\leq \frac{16}{q}\left(1-\frac{1}{q}\right) \opt{\infty}\\
    &\leq \frac{16}{q+1}\opt{\infty},
\end{align*}
where the last inequality follows that $z\left(1-\frac{z}{16}\right)$ is increasing for $0 \le z \le 1$.
Therefore, the statement holds.
\end{proof}

From the above lemma, by repeating local improvement $\lceil 16 \varepsilon^{-1} \rceil$ times,
we obtain a desired point in $P$.
We analyze the complexity of \Cref{alg:convex}.
Suppose that we have an algorithm for finding a maximum-weight solution $\mathcal X$ for an arbitrary weight function $w'\colon E \to \mathbb R$ that runs in $\order{\size{E}^c}$ time.
Picking an arbitrary solution $X^{(1)}$ can be done using this algorithm, 
and we use this algorithm once in each repetition. 
The other operations in the loop can be done in $\order{\varepsilon^{-1}}$ time.
Therefore, the running time of this algorithm is $\order{\varepsilon^{-2}\size{E}^c + k}$ time.

\begin{theorem}\label{thm:expectation}
    Let $E$ be a finite set, $\mathcal X \subseteq 2^E$, and $w\colon E \to \mathbb R_{> 0}$.
    Suppose that there is an algorithm for finding a solution $X \in \mathcal X$ with the maximum weight for an arbitrary weight function $w' \colon E \to \mathbb{R}$ that runs in $\order{\size{E}^c}$ time.
    Then, there is a randomized algorithm for \MSDC{} with an approximation factor $(1 - \varepsilon)(1 - 1/k)$ for any $\varepsilon > 0$ in expectation that runs in $\order{\varepsilon^{-2}\size{E}^c + k}$ if we allow duplication.
\end{theorem}

Finally, we derandomize \Cref{alg:convex}.
Let $x$ be a probability distribution founded by \Cref{alg:convex} and $\mathcal X(x)$ be the set of elements in $\mathcal X$ with non-zero probability.
To derandomize this algorithm, the only change is to deterministically choose $k$ elements from the non-zero entries of $x$ that maximize $d_{\rm sum}$, rather than sampling them at random from the distribution $x$.

\Cref{alg:convex} updates a probability distribution by computing a convex combination of the current probability distribution and
indicator vector that maximizes the expected value at each step.
Since the number of updates is $\order{\varepsilon^{-1}}$, 
the number of non-zero elements in $x$ is bounded by $\order{\varepsilon^{-1}}$.
Therefore, $k$ elements in $\mathcal X(x)^k$ that maximizes the function $d_{\rm sum}$ can be found in $\order{\varepsilon^{-(k+2)}\size{E}}$ time.
Notice that the additional factor $2$ is the time needed to evaluate the function value of $d_{\rm sum}$.
When selecting $k$ elements in a depth-first manner, each update of $d_{\rm sum}$ can be done in $\order{k\size{E}}$ time, and
a desired set of solutions can be found in $\order{\varepsilon^{-(k+1)}\size{E}}$ time.
Therefore, we obtain the following theorem.

\begin{theorem}
    Let $E$ be a finite set, $\mathcal X \subseteq 2^E$, and $w\colon E \to \mathbb R_{> 0}$.
    Suppose that there is an algorithm for finding a solution $X \in \mathcal X$ with the maximum weight for an arbitrary weight function $w' \colon E \to \mathbb{R}$ that runs in $\order{\size{E}^c}$ time.
    Then, there is a deterministic algorithm for \MSDC{} with an approximation factor $(1 - \varepsilon)(1 - 1/k)$ for any $\varepsilon > 0$ that runs in $\order{\varepsilon^{-(k+1)}\size{E}^{c+1}}$ time if we allow duplication.    
\end{theorem}

\section{Conclusion}\label{sec:concl}
We give a framework for designing approximation algorithms for \MSDC{}. 
This framework runs in $\mathrm{poly}(\size{E} + k)$ time and is versatile, which allows to apply to the diverse version of several well-studied combinatorial problems.
The key to applying our framework is a polynomial-time algorithm for \textsc{Weighted Extension}, which yields constant-factor approximation algorithms for \MSDM{} and \DMCB{}.
Moreover, we obtain a PTAS for \MSDC{} if we can solve the problem in polynomial time for fixed $k$, yielding PTASes for \DMGC{} and \DIS{}.

There are several directions from our work.
Our approximation algorithms for \MSDM{} and \DMCB{} give a approximation factor $\max(1-2/k, 1/2)$, which is a constant when $k$ is a constant.
The APX-hardness of these problems is an interesting question to prove a limitation of finding ``approximately'' diverse solutions.
Our work focuses only on \textsc{Max-Sum Hamming Distance} as our objective function.
However, \textsc{Max-Min Hamming Distance} or other diversity measures would be more acceptable in some practical applications.
It would be worth investigating these diversity measures from the viewpoint of approximability.

\subsubsection*{Acknowledgements}
The authors thank Yutaro Yamaguchi, Jaehoon Yu, and Shungo Kumazawa for helpful discussions which led to improvements of the paper. 
This work was partially supported by JSPS Kakenhi Grant Numbers 
JP19H01133, 
JP21K17812 and 
JP21H05861, 
JST CREST Grant Number JPMJCR18K3, 
and JST ACT-X Grant Number JPMJAX2105, 
Japan

\bibliographystyle{plain}
\bibliography{main}

\end{document}